\keywords{quantum programming languages, quantum instruction set architectures}
\newlength\LineWidth
\newlength\Amplitude
\newlength\SegLength
\definecolor{HLcolor}{RGB}{240,0,0}
\newcommand\tikzmark[1]{%
  \tikz[overlay,remember picture] \node (#1) {};}
\newcommand{\highlight@DoHighlight}{
  \draw[HLcolor,line width=\LineWidth,decorate,decoration={zigzag,amplitude=\Amplitude,segment length=\SegLength}]  ($(begin highlight)+(0,-2pt)$) -- ($(end highlight)+(0,-2pt)$) ;
}
\newcommand{\highlight@BeginHighlight}{
  \coordinate (begin highlight) at (0,0) ;
}
\newcommand{\highlight@EndHighlight}{
  \coordinate (end highlight) at (0,0) ;
}
\newdimen\highlight@previous
\newdimen\highlight@current
\DeclareRobustCommand*\highlight[1][]{%
  \SOUL@setup
  \def\SOUL@preamble{%
    \begin{tikzpicture}[overlay, remember picture]
      \highlight@BeginHighlight
      \highlight@EndHighlight
    \end{tikzpicture}%
  }%
  \def\SOUL@postamble{%
    \begin{tikzpicture}[overlay, remember picture]
      \highlight@EndHighlight
      \highlight@DoHighlight
    \end{tikzpicture}%
  }%
  \def\SOUL@everyhyphen{%
    \discretionary{%
      \SOUL@setkern\SOUL@hyphkern
      \SOUL@sethyphenchar
      \tikz[overlay, remember picture] \highlight@EndHighlight ;%
    }{%
    }{%
      \SOUL@setkern\SOUL@charkern
    }%
  }%
  \def\SOUL@everyexhyphen##1{%
    \SOUL@setkern\SOUL@hyphkern
    \hbox{##1}%
    \discretionary{%
      \tikz[overlay, remember picture] \highlight@EndHighlight ;%
    }{%
    }{%
      \SOUL@setkern\SOUL@charkern
    }%
  }%
  \def\SOUL@everysyllable{%
    \begin{tikzpicture}[overlay, remember picture]
      \path let \p0 = (begin highlight), \p1 = (0,0) in \pgfextra
        \global\highlight@previous=\y0
        \global\highlight@current =\y1
      \endpgfextra (0,0) ;
      \ifdim\highlight@current < \highlight@previous
        \highlight@DoHighlight
        \highlight@BeginHighlight
      \fi
    \end{tikzpicture}%
    \the\SOUL@syllable
    \tikz[overlay, remember picture] \highlight@EndHighlight ;%
  }%
  \SOUL@
}
\DeclareDocumentCommand\MarkText{O{red}O{0.75pt}O{5pt}m}{%
  \colorlet{HLcolor}{#1}
  \setlength\Amplitude{#2}%
  \setlength\SegLength{#3}%
  \tikzmark{endquote}\tikzmark{beginquote}\highlight{#4}%
}
\definecolor{mygray}{gray}{0.4}
\newcommand{\cancel}[1]{%
    \tikz[baseline=(tocancel.base)]{
        \node[inner sep=0pt,outer sep=0pt] (tocancel) {\ensuremath{#1}};
        \draw[mygray] ($(tocancel.south west)$) -- ($(tocancel.north east)$);
    }%
}%
\DeclareRobustCommand{\minwidthbox}[2]{%
  \ifmmode
    \expandafter\mathmakebox
  \else
    \expandafter\makebox
  \fi
  [\ifdim#2<\width\width\else#2\fi]{#1}%
}
\newcommand{\mymapsto}[2]{\hphantom{\mapsto}\smash{\llap{\ensuremath{\xmapsto{\minwidthbox{#1}{#2}}{}}}}}
\crefname{lemma}{Lemma}{Lemmas}
\Crefname{lemma}{Lemma}{Lemmas}
\newcommand{\Mod}[1]{\ \mathrm{mod}\ #1}
\DeclarePairedDelimiter\abs{\lvert}{\rvert}%
\DeclarePairedDelimiter\norm{\lVert}{\rVert}%
\lst@ifdisplaystyle\color{blue!50!black}\bfseries\fi\ttfamily,
\begin{document}

\author{Charles Yuan}
\affiliation{
  \institution{MIT CSAIL}
  \streetaddress{32 Vassar St}
  \city{Cambridge}
  \state{MA}
  \postcode{02139}
  \country{USA}
}
\email{chenhuiy@csail.mit.edu}
\orcid{0000-0002-4918-4467}

\author{Agnes Villanyi}
\affiliation{
  \institution{MIT CSAIL}
  \streetaddress{32 Vassar St}
  \city{Cambridge}
  \state{MA}
  \postcode{02139}
  \country{USA}
}
\email{agivilla@mit.edu}
\orcid{0009-0005-3121-2537}

\author{Michael Carbin}
\affiliation{
  \institution{MIT CSAIL}
  \streetaddress{32 Vassar St}
  \city{Cambridge}
  \state{MA}
  \postcode{02139}
  \country{USA}
}
\email{mcarbin@csail.mit.edu}
\orcid{0000-0002-6928-0456}

\title[Quantum Control Machine: The Limits of Control Flow in Quantum Programming]{Quantum Control Machine}
\subtitle{The Limits of Control Flow in Quantum Programming}

\begin{abstract}
Quantum algorithms for tasks such as factorization, search, and simulation rely on control flow such as branching and iteration that depends on the value of data in superposition. High-level programming abstractions for control flow, such as switches, loops, higher-order functions, and continuations, are ubiquitous in classical languages. By contrast, many quantum languages do not provide high-level abstractions for control flow in superposition, and instead require the use of hardware-level logic gates to implement such control flow.

The reason for this gap is that whereas a classical computer supports control flow abstractions using a program counter that can depend on data, the typical architecture of a quantum computer does not analogously provide a program counter that can depend on data in superposition.
As a result, the complete set of control flow abstractions that can be correctly realized on a quantum computer has not yet been established.

In this work, we provide a complete characterization of the properties of control flow abstractions that are correctly realizable on a quantum computer. First, we prove that even on a quantum computer whose program counter exists in superposition, one cannot correctly realize control flow in quantum algorithms by lifting the classical conditional jump instruction to work in superposition. This theorem denies the ability to directly lift general abstractions for control flow such as the $\lambda$-calculus from classical to quantum programming.

In response, we present the necessary and sufficient conditions for control flow to be correctly realizable on a quantum computer. We introduce the quantum control machine, an instruction set architecture featuring a conditional jump that is restricted to satisfy these conditions. We show how this design enables a developer to correctly express control flow in quantum algorithms using a program counter in place of logic gates.
\end{abstract}

\maketitle

\section{Introduction}

\emph{Quantum algorithms} promise computational advantage in areas ranging from factorization~\citep{shor1997} and search~\citep{grover1996} to data analysis~\citep{harrow2009,wiebe2012,lloyd2015} and simulation~\citep{babbush2018,childs2018,abrams1997}.

The power of a quantum algorithm arises from its ability to manipulate quantum data, which exists in a weighted sum over many classical states known as a \emph{superposition}. The basic unit of quantum information is the \emph{qubit} --- a superposition of the bits 0 and 1. A quantum computer can transform the values and weights within a superposition by performing quantum logic gates, formally known as \emph{unitary operators}. It can also \emph{measure} quantum data, which collapses a superposition to a classical state with a probability that is determined by its weight in the superposition.

One example of data in superposition is a 2-qubit quantum integer $k$ that takes on the values 0, 1, 2, and 3 at the same time. In the notation of quantum algorithms, one such $k$ is written as $\frac{1}{2} \ket{0} + \frac{1}{2} \ket{1} - \frac{1}{2} \ket{2} + \frac{1}{2} \ket{3}$ where $\ket{2}$ or $\ket{10_2}$ in binary denotes a qubit in the 1 state and a qubit in the 0 state, and $- \frac{1}{2}$ denotes the weight of that state, whose sign component is called a \emph{phase}.

Unlike classical probabilities that are non-negative real numbers, the weights in a superposition are complex numbers whose values can combine or cancel when added, in a phenomenon known as \emph{interference}. In turn, interference enables quantum advantage by amplifying the probability that a quantum algorithm produces a correct result to be larger than that of any classical algorithm.

A widely adopted representation of a quantum computation is as a \emph{quantum circuit}, a sequence of quantum logic gates operating over qubits that is the quantum analogue of a Boolean logic circuit.
To assist in manipulating the complex quantum circuits that arise when implementing quantum algorithms, researchers have developed \emph{quantum programming languages}~\citep{qsharp, quipper, qwire, qml, silq, voichick2023}.

\subsection{The Challenge of Control Flow in Superposition}

A programming abstraction that is integral to classical algorithms is control flow such as branching and iteration that depends on the value of data.
An analogous concept is also integral to quantum algorithms, in which control flow depends on the value of data in quantum superposition.

\begin{example}[Branching] \label{ex:branching}
  Where a classical computation takes a value $k$ and executes the $k$th branch of a switch statement on the value $x$,
  a quantum computation takes a superposition of $k$ and executes a superposition of the corresponding branches to produce a superposition of $x$.

  In the formal notation of quantum algorithms, given a set of functions $\{U_i\}$ representing the branches, branching transforms the data from $\sum_k \ket{k} \ket{x} \mapsto \sum_k \ket{k} (U_k \ket{x})$.
  This operation is used in algorithms for physical simulation~\citep{babbush2018,childs2012,low2019}, in which each $U_i$ encodes a component of the description of the target system.
\end{example}

\begin{example}[Iteration] \label{ex:iteration}
  Where a classical computation takes a value $k$ and repeats an operation for $k$ iterations on the value $x$, a quantum computation takes a superposition of $k$ and repeats the operation for a superposition of numbers of iterations to produce a superposition of $x$.

  In formal notation, given a function $U$ whose $i$th power is $U^i$, iteration transforms the data from $\sum_k \ket{k} \ket{x} \mapsto \sum_k \ket{k} (U^k \ket{x})$.
  Iteration is a special case of branching that is used in algorithms for factoring~\citep{shor1997}, where $U$ maps $\ket{x} \mapsto \ket{ax \Mod N}$, and for phase estimation~\citep{kitaev1995} as found in simulation~\citep{abrams1997} and linear algebra~\citep{harrow2009}.
\end{example}

Programming abstractions for control flow are natively supported by the typical architecture of a classical computer.
In imperative programming, the \texttt{if}-statement for branching and \texttt{for}-loop for iteration compile to a \emph{program counter} that determines the current instruction and a \emph{conditional jump} instruction that updates the program counter using a condition on a data register.
Control flow is also straightforward to realize in functional programming, in which abstractions for branching and iteration emerge from the Church encoding~\citep{church1941} of data types in the $\lambda$-calculus.

By contrast, programming abstractions for control flow in superposition are not natively supported by the typical architecture of a quantum computer.
Whereas a classical computer provides a program counter that can depend on data, the typical architecture of a quantum computer does not provide a program counter that can depend on data in superposition, nor a representation of $\lambda$-terms in superposition. Instead, it requires a program to be represented as a quantum circuit, a fixed sequence of bit-level logic gates whose structure cannot depend on data in superposition.

In turn, the lack of a program counter requires alternatives to the typical strategy of compiling abstractions for control flow to manipulations of the program counter.
For example, some quantum languages~\citep{quipper,qwire,qsharp,qiskit} enable the developer to implement \Cref{ex:branching,ex:iteration} by explicitly building a circuit of bit-controlled logic gates.
Alternatively, emerging languages~\citep{silq,tower} provide certain abstractions for control flow, such as a quantum \texttt{if}-statement that branches over a qubit in superposition, that are guaranteed to be physically realizable and can be compiled into circuits.

Prior work, however, leaves unknown whether it is possible to program a quantum computer using other general forms of control flow, such as higher-order functions, that underpin expressive classical languages. Answering this question requires identifying the complete set of control flow abstractions that a quantum computer can support -- that is, the necessary and sufficient conditions for control flow in superposition to be correctly realizable -- which has not been done to date.

\subsection{Theoretical Limits of Control Flow in Superposition}

In this work, we provide a complete characterization of the properties of control flow abstractions that are correctly realizable on a quantum computer, showing that many general forms of control flow from classical programming fail to work correctly over data in quantum superposition.

First, we prove that even given a quantum computer endowed with a representation of a program counter in superposition, one cannot correctly realize control flow within a quantum algorithm by directly lifting the classical conditional jump instruction to work on data in superposition. In turn, the typical strategy from classical programming of compiling arbitrary control flow abstractions to conditional jumps can lead a quantum computer to incorrectly execute quantum algorithms.

\paragraph{Landauer Embedding}
To explain why the classical conditional jump does not work on a quantum computer, we first show that the standard technique researchers use to lift a classical computation into a quantum one gives incorrect results for control flow abstractions such as conditional jump.

Fundamentally, the hardware primitives that can be used to realize a quantum computer that supports control flow in superposition are quantum logic gates, which operate on quantum data without measuring it and thereby inappropriately collapsing its superposition. The reason that it is critical for quantum data to not collapse from superposition is to ensure that the data correctly exhibits interference. Without interference, an algorithm such as~\citet{shor1997} would produce a correct output with exponentially smaller probability and thus lose its quantum advantage.

The challenge is that the mathematical semantics of each quantum logic gate is a unitary operator, which is an invertible and therefore injective function.
By contrast, the state transition function of the conventional conditional jump instruction is not injective. When two distinct instructions jump to the same point, then the identity of the machine state before the jump is lost after the jump.

One reason to hope that this problem may be solvable is a technique, developed by researchers in quantum algorithms and complexity theory, to lift a non-injective classical computation to an injective quantum computation. Specifically, one can convert any function $f(x)$ into an injective function $g(x) = (x, f(x))$ that returns a copy of its input. When iterated over an entire computation, this process yields the computation's output alongside a \emph{history} of its intermediate states.

This standard technique is known as \emph{Landauer embedding}~\citep{landauer1961}, so named as it embeds a computation into a larger state space containing the history.
Prior work~\citep{lagana2009} has proposed to use Landauer embedding to implement the non-injective semantics of conditional jump as a unitary operator. In this scheme, the machine stores a history of previous values of the program counter, and appends a value to the history upon executing each jump instruction.

\paragraph{Disruptive Entanglement}
In this work, however, we demonstrate that a quantum computer that uses Landauer embedding to implement conditional jump produces incorrect outputs, which lead a quantum algorithm to not correctly produce interference and to lose its quantum advantage.

The cause is quantum \emph{entanglement}, the phenomenon in which a superposition state cannot be represented as the product of independent components, and in which discarding one component necessarily collapses the superposition of the other. For a program whose control flow depends on data, the machine produces a state in which the history of the program counter is entangled with the output data. Here, being entangled with the history means that the output data fails to correctly produce interference, and the quantum algorithm produces incorrect results.

\paragraph{No-Embedding Theorem}
More generally, we prove that a quantum computer cannot correctly support any form of control flow in superposition, including conditional jump, whose state transition semantics is not injective.
Formally, we generalize the Landauer embedding to the concept of a \emph{quantum embedding} that defines the most general way to lift a classical computation to a quantum one by storing auxiliary information. Then, we prove a \emph{no-embedding theorem} stating that using any such technique to realize control flow causes the computer to produce incorrect outputs.

As a corollary, we prove that a $\lambda$-calculus featuring superpositions of $\lambda$-terms cannot be used to program a quantum computer, as conjectured by~\citet{vantonder2004}. The reason is that $\beta$-reduction is not injective --- both $\lambda x. x$ and $(\lambda x. x)(\lambda x.x)$ reduce to $\lambda x. x$.
This result precludes the Church encoding of quantum information into $\lambda$-terms, in that it prevents the superposition of two bits $\smash{\frac{1}{\sqrt{2}}}(\ket{0} + \ket{1})$ from being represented as a superposition of $\lambda$-terms $\smash{\frac{1}{\sqrt{2}}}(\ket{\lambda x. \lambda y. x} + \ket{\lambda x. \lambda y. y})$.

\paragraph{Implications}
Our theorem provides a unifying explanation of why numerous classical control flow abstractions, ranging from closures to continuations, remain challenging to adapt to quantum programming since their classical basis in conditional jumps or the $\lambda$-calculus cannot be correctly realized in superposition. New abstractions must take their place in quantum programming.

This result raises caution for proposals from the hardware community~\citep{wang2022,meier2024} for a quantum equivalent of the von Neumann architecture in the form of a reprogrammable quantum computer that stores instructions alongside data in superposition. Though experimentalists have attempted to physically realize such a design~\citep{kjaergaard2020}, our theorem implies that it necessarily supports limited forms of control flow compared to classical computers.

To our knowledge, prior designs of quantum von Neumann architectures do not acknowledge or account for the fundamental limitations to control flow that are formalized by our no-embedding theorem. For instance, to implement a conditional jump instruction, the proposal of \citet{lagana2009} uses the approach of histories and hence does not correctly execute quantum algorithms.

\subsection{Specification for Sound Control Flow in Superposition}

The no-embedding theorem implies that one cannot use arbitrary classical abstractions for control flow to correctly program a quantum computer. To codify the properties of forms of control flow that are correct to use on a quantum computer, we next present the necessary and sufficient conditions for control flow in superposition to be correctly realizable as part of a quantum program.

The first correctness condition, \emph{injectivity}, specifies the control flow abstractions that may be correctly realized on a quantum computer. The second, \emph{synchronization}, specifies the valid programs that may be constructed using these abstractions to correctly implement quantum algorithms.

\paragraph{Injectivity}
As stated by the no-embedding theorem, a quantum computer can correctly realize a programming abstraction for control flow in superposition only if its state transition semantics is inherently injective. The semantics cannot use an embedding --- that is to say, it cannot accumulate a history or any other auxiliary information not integral to the control state of the machine.

\paragraph{Synchronization}
Injectivity is strong enough to guarantee that an abstraction is realizable, but not that all programs constructed from it produce correct outputs.
Given a computation in which the program counter exists in superposition, it is possible for the program counter to become entangled with the data registers and remain entangled in the final machine state of the computation. If so, the problem of disruptive entanglement arises once again, meaning the output is incorrect.

To address this issue, the second necessary condition of \emph{synchronization} states that control flow must eventually become separable from, i.e.\ not entangled with, the data. More precisely, a program is synchronized when at the end of execution, the value of the program counter is identical across each classical state within the machine state superposition. Hence, the data registers and program counter are not entangled, making the output data correct for use by the quantum algorithm.

\paragraph{Implications}
Critically, the fact that control flow must be synchronized to be correctly realizable implies that a quantum computer based on the concept of a program counter can support a \texttt{while}-loop whose number of iterations is a data-dependent value in superposition only if that number is bounded by a classical value. Only then can the loop be synchronized and thus correctly realizable.

\subsection{Instruction Set Architecture for Control Flow in Superposition}

To implement the specification above, we present the \emph{quantum control machine}, an instruction set architecture for quantum programming with control flow in superposition. This architecture is physically realizable via quantum logic gates and is the first to provide both a representation of the program counter in superposition and a sound means of manipulating the program counter.

\paragraph{Instruction Set}
Instead of the conventional conditional jump, the machine uses other control flow instructions whose state transition semantics are inherently injective. These instructions, originally introduced by architectures for classical reversible computers~\citep{thomsen2012,axelsen2007}, use reversible arithmetic to manipulate a \emph{branch control} register whose value tracks how much the program counter advances and is added to the program counter after each cycle.

On top of the control flow instructions from prior work, the quantum control machine adds the ability to execute unitary operators that create and manipulate superpositions of data such as the Hadamard gate, and provides a representation of a program counter in superposition.

\paragraph{Case Studies}
The quantum control machine unifies several forms of control flow that can be implemented in existing quantum programming languages.
In a case study that implements core components of quantum algorithms for phase estimation, quantum walk, and physical simulation, we illustrate how a developer can realize the imperative control flow abstractions of branching (\texttt{switch}) and bounded iteration (\texttt{for}) as synchronized programs. The case study demonstrates how existing control flow patterns that appear in quantum algorithms can be represented in a uniform way using the abstraction of a program counter in superposition and its correct manipulation.

\paragraph{Implications}
Rather than as a target for near-term hardware realization, which is likely to be challenging, we view the quantum control machine as a theoretical model for expressing algorithms that reduces reliance on hardware-level logic gates, and as an intermediate compilation target for proposed quantum programming languages with control flow abstractions such as recursion~\citep{ying2014,ying2012} that have to date not been realized directly in terms of circuits.

A problem that remains open is whether an injective analogue of the $\lambda$-calculus could be similarly developed, which would enable functional programming abstractions for control flow in superposition. The challenge is that in general, function application is not injective, as there is no general way to turn the result of a function application back into a pair of the function and its argument. Moreover, any substitution-based model of computation in which an expression reduces to a final value, and that value also reduces to itself, is fundamentally not injective.

\subsection{Contributions}

In this work, we present the following contributions:
\begin{itemize}[leftmargin=5.5mm]
  \item (\Cref{sec:examples}) We identify that the Landauer embedding, a standard approach to lift classical to quantum computation, does not correctly realize a conditional jump instruction in superposition.
  \item (\Cref{sec:main-theorem}) We prove that programming abstractions with non-injective transition semantics, such as the conventional conditional jump or the $\beta$-reduction of $\lambda$-calculus, cannot correctly realize control flow in superposition, thereby proving conjecture of \citet{vantonder2004}.
  \item (\Cref{sec:synchronization}) We define the necessary and sufficient conditions for control flow in superposition to be correctly realizable as part of a quantum program. First, each programming abstraction must have injective state transition semantics. Second, a program must not entangle the states of data and control flow in its final output, a condition we term synchronization.
  \item (\Cref{sec:machine,sec:case-study}) We introduce the quantum control machine, an instruction set architecture that correctly supports imperative abstractions for branching and bounded iteration in superposition, and present a case study that uses the machine to implement a range of quantum algorithms.
\end{itemize}

\paragraph{Summary}
In this work, we reveal limits on our ability to lift foundational programming abstractions such as the conditional jump and the $\lambda$-calculus to work with data in quantum superposition, a stark contrast to the historical development of control flow abstractions in classical programming.
Faced with these limits, we propose sound principles for using control flow in quantum programs. These principles underpin a new instruction set architecture that paves way to more convenient theoretical models for quantum algorithms and more expressive quantum programming languages.

\section{Background on Quantum Computation} \label{sec:background}

This section overviews key concepts in quantum computation relevant to this work. For a comprehensive reference in quantum computation, please see~\citet{nielsen_chuang_2010}.

\paragraph{Superposition}
A \emph{qubit} exists in a \emph{superposition} of the classical states 0 and 1 --- a linear combination $\gamma_0 \ket{0} + \gamma_1 \ket{1}$ where $\gamma_0, \gamma_1 \in \mathbb{C}$ are complex \emph{amplitudes} satisfying $\abs{\gamma_0}^2 + \abs{\gamma_1}^2 = 1$.
Examples of qubits are classical $\ket{0}$ and $\ket{1}$, as well as the states $\smash{\frac{1}{\sqrt{2}}}{(\ket{0} + e^{i\varphi} \ket{1})}$ where $\varphi \in [0, 2\pi)$ is known as a \emph{phase}. More precisely, $\varphi$ is the phase of $\ket{1}$ \emph{relative} to $\ket{0}$. By contrast, two states that differ only by a \emph{global} phase, such as $\ket{0}$ and $e^{i\varphi}\ket{0}$, are physically indistinguishable and considered equivalent.

\paragraph{Quantum State}
More generally, a \emph{quantum state} $\ket{\psi}$ of dimension $2^n$ is a superposition over $n$-bit strings. For example, $\ket{\psi} = \smash{\frac{1}{\sqrt{2}}}{(\ket{00}+\ket{11})}$ is a quantum state over two qubits.

The set of $2^n$-dimensional quantum states constitutes the Hilbert space, i.e.\ the formal vector space, $\mathbb{C}^{2^n}$.
A frequently used basis for this space, known as the \emph{computational basis}, is the subset of states $\{\ket{x} \mid x \text{ is an $n$-bit string}\}$ in which one classical state has the entire amplitude 1.

\paragraph{Tensor Product}
Formally, multiple component states form a composite state by the \emph{tensor product} operator $\otimes$. For example, the state $\ket{01}$ is equal to $\ket{0} \otimes \ket{1}$. As is standard in quantum computation, we interchangeably use the notations $\ket{0}\ket{1}$, $\ket{01}$, and $\ket{0, 1}$ to represent $\ket{0} \otimes \ket{1}$.

\paragraph{Physical Operations}
The \emph{norm} of a quantum state $\ket{\psi} = \sum_i \gamma_i \ket{\psi_i}$ is defined as $\norm{\ket{\psi}} = \sum_i \abs{\gamma_i}^2$. A quantum state is physically realizable only if its norm is 1. An operator, i.e.\ function $O$ over states is \emph{norm-preserving} if $\norm{\ket{\psi}} = \norm{O\ket{\psi}}$ for any $\ket{\psi}$, and \emph{linear} if $O(\gamma_1 \ket{\psi_1} + \gamma_2 \ket{\psi_2}) = \gamma_1 (O \ket{\psi_1}) + \gamma_2 (O \ket{\psi_2})$ for any $\gamma_i$ and $\ket{\psi_i}$.
There are exactly two types of operations over quantum states that are physically realizable on a quantum computer --- \emph{unitary operators} and \emph{measurement}.

\paragraph{Unitary Operator}
A \emph{unitary operator} $U$ is a linear and norm-preserving operator over quantum states.
Any unitary operator $U$ satisfies the property that its inverse $U^{-1}$ is equal to its Hermitian adjoint $U^\dagger$.
Formally, a unitary operator may be constructed as a circuit of \emph{quantum gates}. For example, the quantum gates that operate over a single qubit include:
\begin{itemize}
    \item Bit flip (\texttt{X} or \texttt{NOT}), which maps $\ket{x} \mapsto \ket{1 - x}$ for $x \in \{0, 1\}$;
    \item Phase flip (\texttt{Z}), which maps $\ket{x} \mapsto (-1){}^x\ket{x}$;
    \item Hadamard (\texttt{H}), which maps $\ket{x} \mapsto \frac{1}{\sqrt{2}}{(\ket{0} + (-1){}^x \ket{1})}$.
\end{itemize}
A gate may be \emph{controlled} by one or more qubits, forming a larger unitary operator. For example, the two-qubit \texttt{CNOT} gate maps $\ket{0}\ket{x} \mapsto \ket{0}\ket{x}$ and $\ket{1}\ket{x} \mapsto \ket{1}\texttt{NOT}\ket{x} = \ket{1}\ket{1-x}$.

\paragraph{Measurement}
Measuring a quantum state probabilistically collapses its superposition into a classical outcome.\footnotemark{}
Measuring a qubit $\gamma_0 \ket{0} + \gamma_1 \ket{1}$ yields $0$ with probability $\abs{\gamma_0}^2$ and $1$ with probability $\abs{\gamma_1}^2$.
Selectively measuring a state yields a partial outcome and an unmeasured remainder. For example, measuring the first qubit in the state $\ket{\psi} = \gamma_0 \ket{0} \ket{\psi_0}+ \gamma_1 \ket{1} \ket{\psi_1}$ yields the outcome 0 and remainder $\ket{\psi_0}$ with probability $\abs{\gamma_0}^2$, and outcome 1 and remainder $\ket{\psi_1}$ with probability $\abs{\gamma_1}^2$.%
\footnotetext{For ease of understanding, the definition of measurement given here is for a projective measurement in the computational basis. Nevertheless, all results in this work hold equally on the more general definitions of measurement, which can be realized using only unitary operations and projective measurements in the computational basis.}

\paragraph{Copying and Discarding}
The \emph{no-cloning} theorem~\citep{wootters1982} says that no physical process can transform $\ket{\psi} \mapsto \ket{\psi} \otimes \ket{\psi}$ for arbitrary $\ket{\psi}$. Quantum data can be copied only if its basis is fixed --- that is, classical information in computational basis can be copied, whereas arbitrary superpositions cannot.
The \emph{no-deleting} theorem~\citep{pati2000} states that no unitary operator realizes the converse process $\ket{\psi_1} \otimes \ket{\psi_2} \mapsto \ket{\psi_1}$ to delete an arbitrary $\ket{\psi_2}$. In fact, the principle of \emph{implicit measurement}~\citep[Section 4.4]{nielsen_chuang_2010} dictates that discarding a quantum state, i.e.\ throwing it away permanently, is indistinguishable from measuring it.

\paragraph{Entanglement}
Given a product $\ket{\psi} = \ket{\psi_1} \otimes \ket{\psi_2}$, measuring $\ket{\psi_1}$ leaves behind the remainder $\ket{\psi_2}$, and we call such a state $\ket{\psi}$ \emph{separable}.
The opposite of a separable state is an \emph{entangled} state that cannot be written as a tensor product of two components.
Given an entangled state, measuring one component causes the superposition of the other to also collapse.
For example, the \emph{Bell state}~\citep{bell1964} $\ket{\psi} = \smash{\frac{1}{\sqrt{2}}}(\ket{00} + \ket{11})$ is entangled as it cannot be written as a product of two independent qubits. In this state, measuring either of the qubits causes both qubits to collapse from superposition to equal outcomes: either $\ket{0}$ and $\ket{0}$ or $\ket{1}$ and $\ket{1}$ with probability $\big\lvert\frac{1}{\sqrt{2}}\big\rvert{}^2 = \frac{1}{2}$ each.

\paragraph{Interference}
A superposition state fundamentally differs from the distribution of outcomes to which it collapses --- only the former exhibits quantum \emph{interference}, the phenomenon in which the complex amplitudes of a state combine and cancel, as needed by quantum algorithms.

For example, applying the Hadamard gate to the qubit $\ket{0}$ yields the new state $\smash{\frac{1}{\sqrt{2}}}{(\ket{0} + \ket{1})}$ for that qubit. Next, applying the Hadamard gate to that qubit a second time yields:
\begin{align*}
  & \textstyle\frac{1}{\sqrt{2}}{\big(\frac{1}{\sqrt{2}}{(\ket{0} + \ket{1})} + \frac{1}{\sqrt{2}}{(\ket{0} - \ket{1})}\big)} \\
  ={} & \textstyle\frac{1}{2}(\ket{0} \cancel{\textcolor{mygray}{{+} \ket{1}}} + \ket{0} \cancel{\textcolor{mygray}{{-} \ket{1}}}) = \ket{0}
\end{align*}
which when measured always yields 0. Here, interference is the phenomenon that the branches $\ket{1}$ and $-\ket{1}$ with opposite phase mathematically cancel, leaving only the branch $\ket{0}$.

In a quantum algorithm such as integer factorization~\citep{shor1997}, interference is essential to efficiently pruning down a large search space and achieving computational advantage.

\paragraph{Disruptive Measurement}
Given the qubit $\smash{\frac{1}{\sqrt{2}}}{(\ket{0} + \ket{1})}$ after performing the first Hadamard gate, suppose we were to measure it before performing the second Hadamard gate.

Upon measurement, the qubit collapses from superposition to an equal probabilistic mixture of $\ket{0}$ and $\ket{1}$.
When we then apply the second Hadamard gate on this mixture, we obtain not $\ket{0}$ but rather a mixture of $\smash{\frac{1}{\sqrt{2}}}(\ket{0} + \ket{1})$ and $\smash{\frac{1}{\sqrt{2}}}(\ket{0} - \ket{1})$.
Now, no interference occurs --- if we measure this new mixture, we observe an outcome of 0 or 1 with equal probability rather than 0 with certainty.

\paragraph{Disruptive Entanglement}
A key building block for the results in this paper is the fact that the presence of entanglement between the primary state of a computation and an auxiliary or temporary value can cause quantum interference to not occur and quantum advantage to be lost.

To demonstrate, let us start with the same qubit $\smash{\frac{1}{\sqrt{2}}}{(\ket{0} + \ket{1})}$ from above, and then execute the two-qubit \texttt{CNOT} gate on this qubit alongside a new second qubit initialized to $\ket{0}$:
\begin{align*}
  \textstyle\frac{1}{\sqrt{2}}{(\ket{0} + \ket{1})} \otimes \ket{0} \smash{\xmapsto{\texttt{CNOT}}{}} & \textstyle\frac{1}{\sqrt{2}}(\ket{00} + \ket{11})
\end{align*}

The result is the entangled Bell state.
If we were to discard the second qubit, which is equivalent to measuring it, then the first qubit would also collapse from superposition and fail to exhibit interference, as above.
A key fact is that even if we refuse to discard or measure the second qubit, and simply apply the Hadamard gate again to the first qubit, interference still fails to occur:\footnote{Familiar readers may see this case as the deferred measurement principle~\citep{nielsen_chuang_2010}.}%
\begin{align*}
  & \textstyle\frac{1}{\sqrt{2}}{\big(\frac{1}{\sqrt{2}}{(\ket{0} + \ket{1})} \otimes \ket{0} + \frac{1}{\sqrt{2}}{(\ket{0} - \ket{1})} \otimes \ket{1} \big)} \\
  ={} & \textstyle\frac{1}{2}(\ket{00} + \ket{10} + \ket{01} - \ket{11}) \neq \ket{0} \otimes \ket{\psi} \,\text{for any $\ket{\psi}$}
\end{align*}

Unlike $\ket{1}$ and $-\ket{1}$, the branches $\ket{10}$ and $-\ket{11}$ do not interfere, meaning that measuring the first qubit of the final state yields 0 or 1 with equal probability rather than 0 with certainty.

A quantum computation subject to disruptive entanglement degrades to a classical probabilistic computation, which is commonly understood to result in a loss of quantum advantage~\citep[Section 3.2.5]{nielsen_chuang_2010}.
In an algorithm such as~\citet{shor1997}, disrupting interference via entanglement as above causes a wrong answer to be produced or quantum advantage to be lost.

In the example, a correct way to recover the qubit $\smash{\frac{1}{\sqrt{2}}}{(\ket{0} + \ket{1})}$ with its superposition intact from the entangled state $\smash{\frac{1}{\sqrt{2}}}(\ket{00} + \ket{11})$ is to execute the inverse of the \texttt{CNOT} gate, eliminating the undesired entanglement so as to recover the separable state of $\frac{1}{\sqrt{2}}{(\ket{0} + \ket{1})} \otimes \ket{0}$ once again.

\section{Failure of Conditional Jump in Superposition} \label{sec:examples}

In this section, we illustrate the challenges in programming with control flow that depends on data in quantum superposition, and reveal that standard techniques for lifting classical to quantum computation produce incorrect outputs when applied to abstractions for control flow.

\paragraph{Running Example}
Suppose we must implement a program $P$ that, given two machine integer variables \texttt{x} and \texttt{y}, updates their values according to the following transformation:
\begin{equation} \label{eqn:program}
  \ket{\texttt{x}, \texttt{y}} \overset{P}{\mapsto} \begin{cases} \ket{\texttt{x}, \texttt{y} + 1} & \text{if $\texttt{x} = 0$} \\ \ket{\texttt{x} + 1, \texttt{y}} & \text{if $\texttt{x} \neq 0$} \end{cases}
\end{equation}
\newcommand{\sep}{\ensuremath{\!:\!}}%
where $\ket{\texttt{x} \sep 0, \texttt{y} \sep 3}$ denotes a state in which \texttt{x} is 0 and \texttt{y} is 3.
For example, given input $\ket{\texttt{x} \sep 0, \texttt{y} \sep 3}$, the program should yield output $\ket{\texttt{x} \sep 0, \texttt{y} \sep 4}$, and given $\ket{\texttt{x} \sep 3, \texttt{y} \sep 0}$, it should yield $\ket{\texttt{x} \sep 4, \texttt{y} \sep 0}$. We assume that arithmetic operations do not overflow in this example and address overflow in \Cref{sec:overflow}.

Basic operations interleaving arithmetic and control flow such as this example are essential to algorithms for simulation~\citep{babbush2018} and factoring~\citep{shor1997,proos2003} and more generally illustrate the use of control flow in the algorithms we will present in \Cref{sec:case-study}.

\pagebreak

\subsection{Classical Implementation with Conditional Jumps} \label{sec:classical}
\begin{wrapfigure}[8]{r}{.5\textwidth}
\vspace*{-1.5em}
\begin{lstlisting}[language={[x86masm]Assembler}]
    jnz l1 x  ; if x != 0, goto l1
    add y $1  ; add 1 to y
    jmp l2    ; goto l2
l1: add x $1  ; add 1 to x
    jmp l2    ; goto l2
l2: nop       ; no-op
\end{lstlisting}
\setlength{\abovecaptionskip}{5pt}
\caption{Classical assembly implementing \Cref{eqn:program}.} \label{fig:assembly}
\end{wrapfigure}%
On a classical computer, it would be straightforward to realize \Cref{eqn:program} as a transformation of classical data.
In \Cref{fig:assembly}, we depict a typical implementation of this specification as a classical assembly program that relies on conditional jump instructions.

In this assembly program, the machine state contains three registers: \texttt{x}, \texttt{y}, and the program counter \texttt{pc}.
Given the two example initial states of \texttt{x} and \texttt{y} from above, the machine executes the program in \Cref{fig:assembly} by evolving the state according to the following two execution traces:
\begin{align}
\begin{split} \label{eqn:exec-1}
  \ket{\texttt{x} \sep 0, \texttt{y} \sep 3, \texttt{pc} \sep 1} \xmapsto{\texttt{\textbf{jnz}\ l1\ x}} \ket{\texttt{x} \sep 0, \texttt{y} \sep 3, \texttt{pc} \sep 2} \xmapsto{\texttt{\textbf{add}\ y\ \$1}} \ket{\texttt{x} \sep 0, \texttt{y} \sep 4, \texttt{pc} \sep 3} \xmapsto{\texttt{\textbf{jmp}\ l2}} \ket{\texttt{x} \sep 0, \texttt{y} \sep 4, \texttt{pc} \sep 6}
\end{split}\\[0.2em]
\begin{split} \label{eqn:exec-2}
  \ket{\texttt{x} \sep 3, \texttt{y} \sep 0, \texttt{pc} \sep 1} \xmapsto{\texttt{\textbf{jnz}\ l1\ x}} \ket{\texttt{x} \sep 3, \texttt{y} \sep 0, \texttt{pc} \sep 4} \xmapsto{\texttt{\textbf{add}\ x\ \$1}} \ket{\texttt{x} \sep 4, \texttt{y} \sep 0, \texttt{pc} \sep 5} \xmapsto{\texttt{\textbf{jmp}\ l2}} \ket{\texttt{x} \sep 4, \texttt{y} \sep 0, \texttt{pc} \sep 6}
\end{split}
\end{align}
where the first trace is the $\texttt{x} = 0$ branch of \Cref{eqn:program}, and the second is the $\texttt{x} \neq 0$ branch.

\subsection{Superposition of Program Executions} \label{sec:unitarity-problem}

On a quantum computer, we require a program $P$ that manipulates \texttt{x} and \texttt{y} as data that exist in quantum superposition.
Given a superposition of the two input states of \Cref{eqn:exec-1,eqn:exec-2}, the program must produce the corresponding superposition of their output states:
\begin{align} \label{eqn:superposition}
  \textstyle\frac{1}{\sqrt{2}}(\ket{\texttt{x} \sep 0, \texttt{y} \sep 3} - \ket{\texttt{x} \sep 3, \texttt{y} \sep 0}) \overset{P}{\mapsto}{} \textstyle\frac{1}{\sqrt{2}}(\ket{\texttt{x} \sep 0, \texttt{y} \sep 4} - \ket{\texttt{x} \sep 4, \texttt{y} \sep 0})
\end{align}

As a contrast, it would be incorrect for the machine to simply measure \texttt{x} and branch on the outcome. Doing so collapses superposition,\footnote{As a technical note, measuring \texttt{x} in this example also collapses \texttt{y} because \texttt{x} and \texttt{y} are entangled (\Cref{sec:background}), but that fact is not crucial, as the output would be incorrect even if \texttt{x} alone collapsed.} meaning the program produces the output:
\begin{align} \label{eqn:collapse}
\begin{cases} \ket{\texttt{x} \sep 0, \texttt{y} \sep 4} & \text{with probability } \frac{1}{2} \\ \ket{\texttt{x} \sep 4, \texttt{y} \sep 0} & \text{with probability } \frac{1}{2} \end{cases}
\end{align}

If conditional branching collapsed the state as above, it would prevent a quantum algorithm such as \citet{shor1997,ambainis2004,babbush2018} from leveraging interference (\Cref{sec:background}) in order to obtain computational advantage. Specifically, interference cannot possibly occur after the phase stored by the minus sign in \Cref{eqn:superposition} is lost upon collapse.

\paragraph{Program Superposition}
As an alternative to measuring the data, we consider the possibility of a quantum instruction set architecture analogous to the classical one in \Cref{sec:classical} in which the control flow of the program, as embodied by the program counter, may depend on the value of data.

Such a machine is specified as follows. Its state contains quantum registers \texttt{x}, \texttt{y}, and \texttt{pc}, and its state transition function lifts the machine in \Cref{sec:classical} to superposition, taking:
\begin{align*}
  \textstyle\sum_i \gamma_i \ket{\texttt{x}_i, \texttt{y}_i, \texttt{pc}_i} \mapsto \sum_i \gamma_i \ket{\texttt{x}'_i, \texttt{y}'_i, \texttt{pc}'_i}
\end{align*}
whenever the classical machine of \Cref{sec:classical} would step each constituent $\ket{\texttt{x}_i, \texttt{y}_i, \texttt{pc}_i} \mapsto \ket{\texttt{x}'_i, \texttt{y}'_i, \texttt{pc}'_i}$. Mathematically, this operator is \emph{linear} over the quantum state of the machine.

\paragraph{Running Example}
One may envision that on this machine, we could directly execute the program in \Cref{fig:assembly} to manipulate \texttt{x} and \texttt{y} while preserving their superposition.

Given the superposition input from \Cref{eqn:superposition}, this machine sets the initial \texttt{pc} to 1. It then evolves the state according to a superposition of the traces in \Cref{eqn:exec-1,eqn:exec-2}:
\begin{align} \label{eqn:exec-superposition}
\begin{split}
  & \textstyle\frac{1}{\sqrt{2}}(\ket{\texttt{x} \sep 0, \texttt{y} \sep 3, \texttt{pc} \sep 1} - \ket{\texttt{x} \sep 3, \texttt{y} \sep 0, \texttt{pc} \sep 1}) \\[-0.1em]
  \mymapsto{\texttt{\textbf{jnz}\ l1\ x} \ +\ \texttt{\textbf{jnz}\ l1\ x}}{2.5cm} & \textstyle\frac{1}{\sqrt{2}}(\ket{\texttt{x} \sep 0, \texttt{y} \sep 3, \texttt{pc} \sep 2} - \ket{\texttt{x} \sep 3, \texttt{y} \sep 0, \texttt{pc} \sep 4}) \\[-0.1em]
  \mymapsto{\texttt{\textbf{add}\ y\ \$1} \ +\ \texttt{\textbf{add}\ x\ \$1}}{2.5cm} & \textstyle\frac{1}{\sqrt{2}}(\ket{\texttt{x} \sep 0, \texttt{y} \sep 4, \texttt{pc} \sep 3} - \ket{\texttt{x} \sep 4, \texttt{y} \sep 0, \texttt{pc} \sep 5}) \\[-0.1em]
  \mymapsto{\texttt{\textbf{jmp}\ l2} \ +\ \texttt{\textbf{jmp}\ l2}}{2.5cm} & \textstyle\frac{1}{\sqrt{2}}(\ket{\texttt{x} \sep 0, \texttt{y} \sep 4, \texttt{pc} \sep 6} - \ket{\texttt{x} \sep 4, \texttt{y} \sep 0, \texttt{pc} \sep 6}) \\[-0.1em]
  ={} & \textstyle\frac{1}{\sqrt{2}}(\ket{\texttt{x} \sep 0, \texttt{y} \sep 4} - \ket{\texttt{x} \sep 4, \texttt{y} \sep 0}) \otimes \textcolor{mygray}{\cancel{\ket{\texttt{pc} \sep 6}}}
\end{split}
\end{align}

In this trace, each $\mapsto$ represents the execution of a superposition of instructions. For example, the first instance of $\mapsto$ executes a superposition of \texttt{\textbf{jnz}\ l1\ x} and itself, the second executes a superposition of \texttt{\textbf{add}\ y\ \$1} and \texttt{\textbf{add}\ x\ \$1}, and the third a superposition of \texttt{\textbf{jmp}\ l2} and itself.

At the end, the machine may discard the value of \texttt{pc}, leaving behind only the desired output $\frac{1}{\sqrt{2}}(\ket{\texttt{x} \sep 0, \texttt{y} \sep 4} - \ket{\texttt{x} \sep 4, \texttt{y} \sep 0})$ as specified by the right-hand side of \Cref{eqn:superposition}.

\paragraph{Physical Realizability}
However, the ideal semantics in \Cref{eqn:exec-superposition} is not physically realizable on a quantum computer. Physical principles dictate that a transformation over quantum states must take any physically realizable input state to a physically realizable output state. Formally, it must \emph{preserve norms} of states.
By contrast, given certain physically realizable states over \texttt{x}, \texttt{y}, and \texttt{pc}, the state transition function described above can produce a physically unrealizable state:
\begin{align*}
  \begin{split}
  & \textstyle\frac{1}{\sqrt{2}}(\ket{\texttt{x} \sep 0, \texttt{y} \sep 0, \texttt{pc} \sep 3} - \ket{\texttt{x} \sep 0, \texttt{y} \sep 0, \texttt{pc} \sep 5}) \\[-0.1em]
  \mymapsto{\texttt{\textbf{jmp}\ l2}\ +\ \texttt{\textbf{jmp}\ l2}}{2cm}& \textstyle\frac{1}{\sqrt{2}}(\ket{\texttt{x} \sep 0, \texttt{y} \sep 0, \texttt{pc} \sep 6} - \ket{\texttt{x} \sep 0, \texttt{y} \sep 0, \texttt{pc} \sep 6}) = 0
  \end{split}
\end{align*}

Here, the output has norm 0, a physically impossible outcome.
The reason is that the state transition in \Cref{sec:classical} is not injective --- it maps two distinct inputs to the same output:
\begin{align}
  \ket{\textcolor{mygray}{\texttt{x} \sep 0, \texttt{y} \sep 0}, \texttt{pc} \sep 3} & \xmapsto{\texttt{\textbf{jmp}\ l2}} \ket{\textcolor{mygray}{\texttt{x} \sep 0, \texttt{y} \sep 0}, \texttt{pc} \sep 6} \label{eqn:bad-state-1} \\
  \ket{\textcolor{mygray}{\texttt{x} \sep 0, \texttt{y} \sep 0}, \texttt{pc} \sep 5} & \xmapsto{\texttt{\textbf{jmp}\ l2}} \ket{\textcolor{mygray}{\texttt{x} \sep 0, \texttt{y} \sep 0}, \texttt{pc} \sep 6} \label{eqn:bad-state-2}
\end{align}

By contrast, for an operator over quantum states to be both linear and norm-preserving, it must be unitary (\Cref{sec:background}), meaning it has an inverse and is injective by definition.

\subsection{Landauer Embedding and Disruptive Entanglement} \label{sec:entanglement-problem}

There exists a standard technique to compute a non-injective function inside a quantum computation, known as Landauer embedding~\citep{landauer1961}. We show, however, that using this technique to implement control flow leads a quantum algorithm to produce incorrect outputs.

\begin{definition}[Landauer Embedding]
Given a non-injective function $f$, one may \emph{embed} it into an injective function $F(s) = (s, f(s))$ that also returns a copy of the input. From $F(s)$, one extracts the embedded value of $f(s)$ by discarding $s$. This process can be iterated as necessary.
\end{definition}

\paragraph{History}
One may attempt to apply Landauer embedding to the semantics of conditional jump by maintaining a \emph{history} of program counters in memory, into which each executed step is written.
We denote the current program counter by $\texttt{pc}_0$ and the history after $t$ time steps have elapsed by $\texttt{pc}_{1}, \texttt{pc}_{2}, \ldots, \texttt{pc}_{t}$, such that $\texttt{pc}_1$ is the value from the immediately previous time step, and so on.

\paragraph{Physical Realizability}
This construction is realizable as a unitary operator.
One can see that under Landauer embedding, the analogues of the states from \Cref{eqn:bad-state-1,eqn:bad-state-2} evolve differently:
\begin{align*}
\begin{split}
  & \ket{\textcolor{mygray}{\texttt{x} \sep 0, \texttt{y} \sep 0}, \texttt{pc}_0 \sep 3, \ldots, \texttt{pc}_{t} \sep 1} \xmapsto{\texttt{\textbf{jmp}\ l2}} \ket{\textcolor{mygray}{\texttt{x} \sep 0, \texttt{y} \sep 0}, \texttt{pc}_{0} \sep 6, \texttt{pc}_{1} \sep 3, \ldots, \texttt{pc}_{t+1} \sep 1}
\end{split}\\[0.1em]
\begin{split}
  & \ket{\textcolor{mygray}{\texttt{x} \sep 0, \texttt{y} \sep 0}, \texttt{pc}_0 \sep 5, \ldots, \texttt{pc}_{t} \sep 1} \xmapsto{\texttt{\textbf{jmp}\ l2}} \ket{\textcolor{mygray}{\texttt{x} \sep 0, \texttt{y} \sep 0}, \texttt{pc}_{0} \sep 6, \texttt{pc}_{1} \sep 5, \ldots, \texttt{pc}_{t+1} \sep 1}
\end{split}
\end{align*}

\paragraph{Disruptive Entanglement}
Landauer embedding, however, introduces a new problem that causes the computation to produce incorrect results.
Consider the new execution trace of the program in \Cref{fig:assembly}, which updates the old trace in \Cref{eqn:exec-superposition} to add a history of program counters:
\begin{align}
\begin{split} \label{eqn:bad-exec}
  \textstyle\frac{1}{\sqrt{2}}(\!&\ket{\texttt{x} \sep 0, \texttt{y} \sep 3, \texttt{pc}_0 \sep 1} - \ket{\texttt{x} \sep 3, \texttt{y} \sep 0, \texttt{pc}_0 \sep 1}) \\[-0.1em]
  \mymapsto{\texttt{\textbf{jnz}\ l1\ x} \ +\ \texttt{\textbf{jnz}\ l1\ x}}{2.5cm} \textstyle\frac{1}{\sqrt{2}}(\!&\ket{\texttt{x} \sep 0, \texttt{y} \sep 3, \texttt{pc}_0 \sep 2, \texttt{pc}_{1} \sep 1} \\[-0.45em]
  -\ & \ket{\texttt{x} \sep 3, \texttt{y} \sep 0, \texttt{pc}_0 \sep 4, \texttt{pc}_{1} \sep 1}) \\[-0.1em]
  \mymapsto{\texttt{\textbf{add}\ y\ \$1} \ +\ \texttt{\textbf{add}\ x\ \$1}}{2.5cm} \textstyle\frac{1}{\sqrt{2}}(\!&\ket{\texttt{x} \sep 0, \texttt{y} \sep 4, \texttt{pc}_0 \sep 3, \texttt{pc}_{1} \sep 2, \texttt{pc}_{2} \sep 1} \\[-0.45em]
  -\ & \ket{\texttt{x} \sep 4, \texttt{y} \sep 0, \texttt{pc}_0 \sep 5, \texttt{pc}_{1} \sep 4, \texttt{pc}_{2} \sep 1}) \\[-0.1em]
  \mymapsto{\texttt{\textbf{jmp}\ l2} \ +\ \texttt{\textbf{jmp}\ l2}}{2.5cm} \textstyle\frac{1}{\sqrt{2}}(\!&\ket{\texttt{x} \sep 0, \texttt{y} \sep 4, \texttt{pc}_0 \sep 6, \texttt{pc}_{1} \sep 3, \texttt{pc}_{2} \sep 2, \texttt{pc}_{3} \sep 1} \\[-0.45em]
  -\ & \ket{\texttt{x} \sep 4, \texttt{y} \sep 0, \texttt{pc}_0 \sep 6, \texttt{pc}_{1} \sep 5, \texttt{pc}_{2} \sep 4, \texttt{pc}_{3} \sep 1}) \\[-0.1em]
  \neq \textstyle\frac{1}{\sqrt{2}}(\!& \ket{\texttt{x} \sep 0, \texttt{y} \sep 4} - \ket{\texttt{x} \sep 4, \texttt{y} \sep 0}) \otimes \ket{\psi} \,\text{for any}\,\ket{\psi}
\end{split}
\end{align}

Like in \Cref{eqn:exec-superposition}, each $\mapsto$ represents the execution of a superposition of instructions.
The difference is that the final state is now entangled, meaning that discarding or, equivalently, measuring one component collapses the superposition of the other (\Cref{sec:background}).
Thus, the machine cannot discard the history without destroying the superposition of \texttt{x} and \texttt{y}. Discarding $\texttt{pc}_1$ yields:
\begin{align*}
\begin{cases} \ket{\texttt{x} \sep 0, \texttt{y} \sep 4, \cancel{\textcolor{mygray}{\texttt{pc}_0 \sep 6, \texttt{pc}_2 \sep 2, \texttt{pc}_3 \sep 1}}} & \text{w.p. } \frac{1}{2} \\ \ket{\texttt{x} \sep 4, \texttt{y} \sep 0, \cancel{\textcolor{mygray}{\texttt{pc}_0 \sep 6, \texttt{pc}_2 \sep 4, \texttt{pc}_3 \sep 1}}} & \text{w.p. } \frac{1}{2} \end{cases}
\end{align*}
which is the same incorrect outcome as having measured \texttt{x} to begin with, as in \Cref{eqn:collapse}.

Moreover, as established in \Cref{sec:background}, simply refusing to measure or discard the history is not an admissible workaround --- using part of an entangled state in place of the right side of \Cref{eqn:superposition} still leads a quantum algorithm to produce a wrong answer or lose computational advantage.

\subsection{No Recovery from Disruptive Entanglement} \label{sec:uncomputation}

From this point, one conceivable avenue to recover from entanglement is to use \emph{uncomputation}~\citep{bennett1973}, the standard technique to erase temporary data in quantum computation.

\begin{definition}[Uncomputation]
Suppose we have non-injective $f$ and $g$, and seek an injective $H$ such that $H(s) = (s, g(f(s)))$. Denote by $F$ and $G$ the Landauer embeddings of $f$ and $g$ respectively.

Composing $F$ and $G$ produces $H(s)$ alongside a temporary value $f(s)$, which we seek to erase. Since $F$ is injective, we execute its partial inverse, denoted $F^\dagger$, thereby \emph{uncomputing} $f(s)$:%
\begin{align*}
s \overset{F}{\mapsto} (s, f(s)) \xmapsto{\textrm{id}\,\otimes\,G} (s, \cancel{f(s)}, g(f(s))) \xmapsto{F^\dagger\,\otimes\,\textrm{id}} (s, g(f(s)))
\end{align*}

Iterating this process in a computation enables all values, except for the initial $s$, to be erased from the machine state. Importantly, the initial $s$ necessarily persists in the state.
\end{definition}

\paragraph{Pitfall}
One may hope to use uncomputation to erase all but the initial program counter from the history in~\Cref{eqn:bad-exec}.
The problem is that uncomputing $f(s)$ both requires and leaves behind the value $s$, and when $f$ is the machine state transition function, the value of $s$ must store not only the value of \texttt{pc} but also that of all data registers on which a jump may depend.

One may suggest modifying the history to store copies of all data registers, but even when doing so is possible,\footnote{As a technical note, copying data registers is not possible in general under the no-cloning theorem (\Cref{sec:background}).} it still does not resolve the problem of entanglement. The initial values of data registers \texttt{x} and \texttt{y}, denoted $\texttt{x}_{\textrm{in}}$ and $\texttt{y}_{\textrm{in}}$ respectively, are now stored in the history and persist in the state that is left behind after all possible uncomputation, which remains entangled:
\begin{align*}
  \textstyle\frac{1}{\sqrt{2}}(\!&\ket{\texttt{x} \sep 0, \texttt{y} \sep 4, \texttt{x}_{\textrm{in}} \sep 0, \texttt{y}_{\textrm{in}} \sep 3} - \ket{\texttt{x} \sep 4, \texttt{y} \sep 0, \texttt{x}_{\textrm{in}} \sep 3, \texttt{y}_{\textrm{in}} \sep 0}) \\
  \neq \textstyle\frac{1}{\sqrt{2}}(\!& \ket{\texttt{x} \sep 0, \texttt{y} \sep 4} - \ket{\texttt{x} \sep 4, \texttt{y} \sep 0}) \otimes \ket{\psi} \,\text{for any}\,\ket{\psi}
\end{align*}

\paragraph{No-Embedding Theorem}
One may hope that the presence of entanglement is caused by the specific encoding of the history in Landauer embedding, and can be avoided by storing less information.

In \Cref{sec:main-theorem}, we prove otherwise. By generalizing the arguments above, we show that any attempt to implement an instruction set with a non-injective state transition semantics on a quantum computer necessarily suffers from the problem of disruptive entanglement.
This theorem implies that there is fundamentally no way to lift the conventional conditional jump to superposition and guarantee that the output of the program is correct for use by a quantum algorithm.

\subsection{Quantum Control Machine} \label{sec:machine-example}

Faced with the impossibility of lifting conventional conditional jumps to superposition, we present a new instruction set architecture called the \emph{quantum control machine}. The key idea of the approach is to start over with alternative control flow primitives, originally introduced by classical reversible architectures~\citep{axelsen2007,thomsen2012}, whose transition semantics are inherently injective without the need for Landauer embedding.

\paragraph{Branch Control}
Instead of a history, this machine tracks the difference in \texttt{pc} from the previous instruction in a new \emph{branch control} register~\citep{axelsen2007,thomsen2012} denoted \texttt{br}.
The design of the machine redefines the semantics of each jump instruction to manipulate \texttt{br}, and adds its value to \texttt{pc} after each instruction.
Specifically, a \texttt{\textbf{jmp}} updates \texttt{br}, and the value of \texttt{br} persists until it is changed again. When \texttt{br} is 1, the program executes step by step, and when \texttt{br} is greater than 1, it continually jumps forward.
To resume single-step execution, the program invokes designated instructions, known as \emph{reverse jumps}, that reset \texttt{br} back to 1.

\paragraph{Running Example}
\begin{figure}
\centering
\hspace*{1em}%
\begin{minipage}[t]{.47\textwidth}
\begin{lstlisting}[language={[x86masm]Assembler},morekeywords={rjmp,rjz}]
l0: jnz l2 x  ; if x != 0, goto l2
    add y $1  ; add 1 to y
l1: jmp l3    ; jump to l3
l2: rjmp l0   ; come from l0
    add x $1  ; add 1 to x
l3: rjz l1 x  ; if x = 0, come from l1
    nop       ; no-op
\end{lstlisting}
\end{minipage}%
\begin{minipage}[t]{.5\textwidth}
\[\arraycolsep=1.7pt\def\arraystretch{1.2}%
\begin{array}{c l}
& \ket{\texttt{x} \sep 3, \texttt{y} \sep 0, \texttt{pc} \sep 1, \texttt{br} \sep 1} \\
\xmapsto{\minwidthbox{\texttt{\textbf{jnz}\ l2\ x}}{1.2cm}} & \ket{\texttt{x} \sep 3, \texttt{y} \sep 0, \texttt{pc} \sep 4, \texttt{br} \sep 3} \\
\xmapsto{\minwidthbox{\texttt{\textbf{rjmp}\ l0}}{1.2cm}} & \ket{\texttt{x} \sep 3, \texttt{y} \sep 0, \texttt{pc} \sep 5, \texttt{br} \sep 1} \\
\xmapsto{\minwidthbox{\texttt{\textbf{add}\ x\ \$1}}{1.2cm}} & \ket{\texttt{x} \sep 4, \texttt{y} \sep 0, \texttt{pc} \sep 6, \texttt{br} \sep 1} \\
\xmapsto{\minwidthbox{\texttt{\textbf{rjz}\ l1\ x}}{1.2cm}} & \ket{\texttt{x} \sep 4, \texttt{y} \sep 0, \texttt{pc} \sep 7, \texttt{br} \sep 1}
\end{array}\]
\end{minipage}

\setlength{\abovecaptionskip}{5pt}
\begin{minipage}[t]{.5\textwidth}
\caption{Assembly program for the quantum control machine implementing \Cref{eqn:program} using reverse jumps.} \label{fig:qcm-assembly}
\end{minipage}%
\hspace*{1em}%
\begin{minipage}[t]{.47\textwidth}
\caption{Execution trace of the program in \Cref{fig:qcm-assembly}, given the input state $\ket{\texttt{x} \sep 3, \texttt{y} \sep 0}$ from \Cref{eqn:exec-2}.} \label{fig:qcm-assembly-trace}
\end{minipage}
\end{figure}
In \Cref{fig:qcm-assembly}, we present an implementation of \Cref{eqn:program} on the quantum control machine.
The notable differences from \Cref{fig:assembly} are the new reverse jump instructions \texttt{\textbf{rjmp}} and \texttt{\textbf{rjz}}.
To illustrate how these instructions work, in \Cref{fig:qcm-assembly-trace} we depict the execution trace of the program in \Cref{fig:qcm-assembly} given the input state $\ket{\texttt{x} \sep 3, \texttt{y} \sep 0}$ from \Cref{eqn:exec-2}.

First, \texttt{\textbf{jnz}\ l2\ x} increases \texttt{br} from 1 to 3. The value of \texttt{pc} advances by 3 from 1 to 4, and \texttt{br} remains 3. If nothing else is done, on the next iteration \texttt{pc} would jump again from 4 to 7.

However, the next instruction is a reverse jump, \texttt{\textbf{rjmp}\ l0}. The effect of this instruction is precisely the opposite of the jump from \texttt{l0} --- it decreases \texttt{br} from 3 to 1. The rest of the program then executes step by step, where \texttt{\textbf{rjz}\ l1\ x} has no effect when \texttt{x} is not 0.

\paragraph{Physical Realizability}
This transition function is injective and unitary, even without employing Landauer embedding.
The reason is that jumps from different lines to the same line must leave behind distinct values of \texttt{br}, as can be seen using the examples from \Cref{eqn:bad-state-1,eqn:bad-state-2}:
\[\arraycolsep=1.7pt\def\arraystretch{1.2}%
\begin{array}{l c l}
  \ket{\textcolor{mygray}{\texttt{x} \sep 0, \texttt{y} \sep 0}, \texttt{pc} \sep 3, \texttt{br} \sep 1} & \xmapsto{\minwidthbox{\text{jump by +3}}{1.3cm}} & \ket{\textcolor{mygray}{\texttt{x} \sep 0, \texttt{y} \sep 0}, \texttt{pc} \sep 6, \texttt{br} \sep 3} \\
  \ket{\textcolor{mygray}{\texttt{x} \sep 0, \texttt{y} \sep 0}, \texttt{pc} \sep 5, \texttt{br} \sep 1} & \xmapsto{\minwidthbox{\text{jump by +1}}{1.3cm}} & \ket{\textcolor{mygray}{\texttt{x} \sep 0, \texttt{y} \sep 0}, \texttt{pc} \sep 6, \texttt{br} \sep 1}
\end{array}\]

Furthermore, no matter what instruction comes on the next line, it is impossible for the two states above to both transition to exactly the same state:
\begin{itemize}
  \item an unconditional \texttt{\textbf{jmp}} (or \texttt{\textbf{rjmp}}) adds the same value to \texttt{br} in both states, meaning the values of \texttt{br} cannot both become equal when they are initially different, and
  \item a conditional \texttt{\textbf{jnz}} (or \texttt{\textbf{rjnz}} or jump with any other condition) sees the same values for data registers \texttt{x} and \texttt{y} in both states, meaning it always modifies \texttt{br} the same way in both states.
\end{itemize}

\paragraph{Disruptive Entanglement Avoided}
We now demonstrate how the program in \Cref{fig:qcm-assembly} avoids the entanglement problem and produces output data that is correct for use by a quantum algorithm.
When given the superposition input of \Cref{eqn:superposition}, this program executes as follows:
\begin{align*}
  & \textstyle\frac{1}{\sqrt{2}}(\ket{\texttt{x} \sep 0, \texttt{y} \sep 3, \texttt{pc} \sep 1, \texttt{br} \sep 1} - \ket{\texttt{x} \sep 3, \texttt{y} \sep 0, \texttt{pc} \sep 1, \texttt{br} \sep 1}) \\[-0.1em]
  \mymapsto{\texttt{\textbf{jnz}\ l2\ x}\ +\ \texttt{\textbf{jnz}\ l2\ x}}{2.5cm} & \textstyle\frac{1}{\sqrt{2}}(\ket{\texttt{x} \sep 0, \texttt{y} \sep 3, \texttt{pc} \sep 2, \texttt{br} \sep 1} - \ket{\texttt{x} \sep 3, \texttt{y} \sep 0, \texttt{pc} \sep 4, \texttt{br} \sep 3}) \\[-0.1em]
  \mymapsto{\texttt{\textbf{add}\ y\ \$1}\ +\ \texttt{\textbf{rjmp}\ l0}}{2.5cm} & \textstyle\frac{1}{\sqrt{2}}(\ket{\texttt{x} \sep 0, \texttt{y} \sep 4, \texttt{pc} \sep 3, \texttt{br} \sep 1} - \ket{\texttt{x} \sep 3, \texttt{y} \sep 0, \texttt{pc} \sep 5, \texttt{br} \sep 1}) \\[-0.1em]
  \mymapsto{\texttt{\ \textbf{jmp}\ l3}\ +\ \texttt{\textbf{add}\ x\ \$1}}{2.5cm} & \textstyle\frac{1}{\sqrt{2}}(\ket{\texttt{x} \sep 0, \texttt{y} \sep 4, \texttt{pc} \sep 6, \texttt{br} \sep 3} - \ket{\texttt{x} \sep 4, \texttt{y} \sep 0, \texttt{pc} \sep 6, \texttt{br} \sep 1}) \\[-0.1em]
  \mymapsto{\texttt{\textbf{rjz}\ l1\ x}\ +\ \texttt{\textbf{rjz}\ l1\ x}}{2.5cm} & \textstyle\frac{1}{\sqrt{2}}(\ket{\texttt{x} \sep 0, \texttt{y} \sep 4, \texttt{pc} \sep 7, \texttt{br} \sep 1} - \ket{\texttt{x} \sep 4, \texttt{y} \sep 0, \texttt{pc} \sep 7, \texttt{br} \sep 1}) \\[-0.1em]
  ={} & \textstyle\frac{1}{\sqrt{2}}(\ket{\texttt{x} \sep 0, \texttt{y} \sep 4} - \ket{\texttt{x} \sep 4, \texttt{y} \sep 0}) \otimes \cancel{\textcolor{mygray}{\ket{\texttt{pc} \sep 7, \texttt{br} \sep 1}}}
\end{align*}

Just as in \Cref{eqn:exec-superposition}, each $\mapsto$ in the trace denotes the execution of a superposition of instructions. In particular, the final $\mapsto$ executes a superposition of \texttt{\textbf{rjz}\ l1\ x} and itself. On both branches of the superposition, this instruction causes \texttt{pc} to become 7 and \texttt{br} to become 1, but by different means. On the branch where \texttt{x} is 0, \texttt{br} decreases from 3 to 1, and on the branch where \texttt{x} is 3, \texttt{br} starts as 1 and does not change. Note that the transition function is injective thanks to the presence of distinct values of \texttt{x} and \texttt{y}, as $\ket{\texttt{x} \sep 0, \texttt{y} \sep 4, \texttt{pc} \sep 7, \texttt{br} \sep 1}$ and $\ket{\texttt{x} \sep 4, \texttt{y} \sep 0, \texttt{pc} \sep 7, \texttt{br} \sep 1}$ are distinct states.

At the end of this execution, the final machine state is separable, i.e.\ not entangled, and discarding \texttt{pc} and \texttt{br} leaves behind the state of \texttt{x} and \texttt{y} with superposition intact. Thus, the program produces the output that is specified by \Cref{eqn:superposition} and correct for use by a quantum algorithm.

\paragraph{Synchronization}
The example illustrates how a program avoids the entanglement problem by ensuring that at the end of execution, \texttt{pc} and \texttt{br} are equal again across all branches of the state superposition. To do so, the program must possess an appropriate reverse jump instruction at the target of each forward jump instruction according to the structure of its control flow.

In \Cref{sec:synchronization}, we define the class of \emph{synchronized} programs, which include \Cref{fig:qcm-assembly}, that satisfy this condition and thereby guarantee that their output is correct for use by a quantum algorithm. Moreover, whereas the analysis above depicts detailed formal reasoning to show that a program is synchronized, in \Cref{sec:case-study} we demonstrate how in practice a developer may leverage the structure of control flow within a program to more easily identify the program as synchronized.

\paragraph{Existing Methods}
In principle, \Cref{eqn:program} may be implemented using an existing quantum programming language such as Quipper~\citep{quipper}, QWire~\citep{qwire}, or Q\#~\citep{qsharp}.
These languages expose the abstractions of qubits and bit-controlled logic gates that enable the developer to build a representation of the program as a quantum circuit.

It can be challenging, however, to represent control flow in superposition directly via a quantum circuit of logic gates. When expressed as an explicit circuit, a conditional branch over a variable \texttt{x} corresponds to a sequence of logic gates controlled on individual qubits that realize the comparison with \texttt{x}. We illustrate this complexity in \Cref{sec:qsharp-silq}, in which we present a 15-line Q\# program that implements \Cref{eqn:program} by instantiating the 229-gate circuit in \Cref{fig:circuit}.
\begin{figure}
\includegraphics[width=\textwidth]{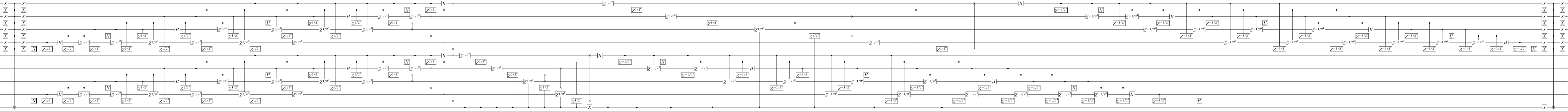}
\caption{Quantum circuit for \Cref{eqn:program}, restricted to 8-bit integers, as implemented by a Q\# program.} \label{fig:circuit}
\end{figure}

To alleviate the complexity of building circuits, researchers have developed higher-level languages such as Silq~\citep{silq}, which provides a quantum \texttt{if}-statement that branches on the value of a qubit in superposition.
In \Cref{sec:qsharp-silq}, we show that the Silq implementation of \Cref{eqn:program} using quantum \texttt{if} also uses the \texttt{forget} statement, an unsafe operation whose correctness must be shown using reasoning outside the immediate automated capabilities of Silq's type system.
The condition that \texttt{forget} must use for the Silq program to be correct is analogous to the condition that the reverse jump instruction must use for the program in \Cref{fig:qcm-assembly} to be synchronized.

Thus, by formally characterizing the properties of sound and realizable control flow on a quantum computer, our work generalizes the reasoning that type systems of existing quantum programming languages perform to enforce injectivity and synchronization.

\paragraph{Summary}
In this section, we presented examples of programs that are important for quantum algorithms and easy to implement on a classical computer, yet are hard to implement on a quantum computer without the abstraction of control flow that can depend on data in superposition.

We illustrated why one cannot program a quantum computer using the conventional conditional jump instruction, and presented an alternative -- reversible jumps and synchronization in the quantum control machine -- that enables correct programming with control flow in superposition.

\section{Theoretical Limits of Control Flow in Superposition} \label{sec:theory}

In this section, we present the \emph{no-embedding theorem}, a no-go theorem for the design of quantum programming languages that states that no programming abstraction with non-injective transition semantics, such as the conventional conditional jump and the $\lambda$-calculus, can be correctly realized in superposition.
More generally, we present the necessary and sufficient conditions for the correct realizability of control flow in superposition. Alongside injectivity, we introduce \emph{synchronization}, the property that the state of data is separable from the state of control flow in the final output of a computation, so that the output data can be correctly used by a quantum algorithm.

\subsection{No-Embedding Theorem and Injectivity} \label{sec:main-theorem}

We begin by defining a \emph{transition system}~\citep{hines2008}, a formalization of any classical model of computation as a set of computation states $S$ and a partial transition function $T : S \rightharpoonup S$.
For example, the Turing machine has $S$ as its state-head-tape configurations and $T$ as its state transition function, while the $\lambda$-calculus has $S$ as the set of $\lambda$-terms and $T$ as $\beta$-reduction.

Though $S$ may be a countably infinite set in principle, any physical computer has a bounded state space in reality. In this section, we assume that all sets are finite and bounded by a function of the length of the longest computation that is physically feasible on the computer.

\paragraph{Superposition of Data}
We lift the definition of transition system to operate over data in quantum superposition.
Formally, we generalize the definitions of \Cref{sec:background} from concrete bit strings to abstract states. Given a finite set of classical states $X$, we denote the corresponding Hilbert space of quantum states as $\mathcal{H}_X$, spanned by the basis $\{\ket{x} \mid x \in X\}$.
Then, a transition system acts on the Hilbert space $\mathcal{H}_S$ via the linear mapping $\mathcal{T}$ that takes $\ket{s} \mapsto \ket{s'}$ whenever $T(s) = s'$ and takes $\ket{s}$ to an unspecified value if $T(s)$ is not defined.

\paragraph{Physical Realizability}
Physically realizing a transition system that operates over quantum information requires constructing it as a unitary operator over an appropriate Hilbert space.
When $T$ is an injective function, $\mathcal{T}$ corresponds to a unitary operator over the Hilbert space $\mathcal{H}_S$.

By contrast, as established in \Cref{sec:examples}, the transition $T$ of the conventional conditional jump is not injective, meaning that it cannot be realized as a unitary operator over $\mathcal{H}_S$.
In the non-injective case, the only alternative possibility is to embed $T$ into a Hilbert space of higher dimension.

\begin{example}[Landauer Embedding]
Let the set $L$ be of \emph{histories} --- lists of elements of $S$.
Then, the larger Hilbert space $\mathcal{H}_S \otimes \mathcal{H}_L$ contains a unitary $U$ that embeds the classical behavior of $T$ on $S$.

Specifically, whenever $T(s) = s'$, we may define $U \ket{s, \ell} = \ket{s', \ell'}$ where $\ell' = \texttt{append}(\ell, s)$.
In the Landauer embedding, computation starts with the empty history $\ell_0 = []$, and discarding, i.e.\ measuring the final $\ell'$ enables a classical output $s' \in S$ to be extracted from the system.
\end{example}

More generally, the use of an auxiliary space makes it possible to embed the classical behavior of a transition system. The Landauer embedding is a specific instance of the following concept:

\begin{definition}[Classical Embedding]
A \emph{classical embedding} for $(S, T)$ is a triple $(\mathcal{H}_L, U, \ket{\eta_0})$ where $\mathcal{H}_L$ is an auxiliary Hilbert space, $\ket{\eta_0}$ is a quantum state with norm 1 that is fixed in $\mathcal{H}_L$, and $U$ is a unitary operator over $\mathcal{H}_S \otimes \mathcal{H}_L$ such that for any $s, s' \in S$ where $T(s) = s'$, we have $U \ket{s, \eta_0} = \ket{s', \eta'}$ for some $\ket{\eta'} \in \mathcal{H}_L$ with norm 1.
\end{definition}

In the above definition, the fact that $\ket{\eta_0}$ must be fixed in $\mathcal{H}_L$ is essential to avoid the possibility of cheating by, for example, setting $\ket{\eta_0} = \ket{s'}$ given advanced knowledge of $s$.

Next, we generalize the above definition to account for not only how a transition system takes classical inputs to classical outputs, but also how in a quantum computation it should accept a superposition of input states and produce a superposition of corresponding output states:

\begin{definition}[Quantum Embedding] \label{def:quantum-embedding}
  A \emph{quantum embedding} for $(S, T)$ is a triple $(\mathcal{H}_L, U, \ket{\eta_0})$ where $\mathcal{H}_L$ is an auxiliary Hilbert space, $\ket{\eta_0}$ is a quantum state with norm 1 that is fixed in $\mathcal{H}_L$, and $U$ is a unitary over $\mathcal{H}_S \otimes \mathcal{H}_L$ such that for any $\ket{\psi}, \ket{\psi'} \in \mathcal{H}_S$ where $\mathcal{T}\ket{\psi} = \ket{\psi'}$, we have $U (\ket{\psi} \otimes \ket{\eta_0}) = \ket{\psi'} \otimes \ket{\eta'}$ for some $\ket{\eta'} \in \mathcal{H}_L$ with norm 1.
\end{definition}

Critically, this definition ensures that the output state $\ket{\psi'} \otimes \ket{\eta'}$ is separable, and subsumes the classical embedding, in which the output state is classical and trivially separable.
As established in \Cref{sec:entanglement-problem}, separability is necessary to guarantee that the output data exhibits correct quantum interference so that the quantum algorithm being implemented produces correct results.

However, while this separability requirement is essential for correctness of algorithms, it in fact precludes the existence of a quantum embedding for any non-injective transition system:
\begin{theorem}[No-Embedding] \label{thm:no-embedding}
  If the state transition function $T$ is not injective, then no quantum embedding exists for the transition system $(S, T)$.
\end{theorem}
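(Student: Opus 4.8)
The plan is to prove the contrapositive: assume a quantum embedding $(\mathcal{H}_L, U, \ket{\eta_0})$ exists for $(S, T)$ and derive that $T$ must be injective. The conceptual crux, mirroring the disruptive-entanglement examples of \Cref{sec:entanglement-problem}, is that the embedding condition must hold not merely on basis states but on arbitrary superpositions, and it is a carefully chosen superposition that exposes the obstruction. On basis states alone there is no tension: if $T(s_1) = T(s_2) = s'$, then $U(\ket{s_1}\otimes\ket{\eta_0}) = \ket{s'}\otimes\ket{\eta_1}$ and $U(\ket{s_2}\otimes\ket{\eta_0}) = \ket{s'}\otimes\ket{\eta_2}$ are both separable and individually unobjectionable. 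The failure surfaces only under interference.

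First I would extract a witness of non-injectivity: since $T$ is not injective, there exist distinct $s_1, s_2 \in S$ with $T(s_1) = T(s_2) = s'$. I would then form the normalized state $\ket{\psi} = \frac{1}{\sqrt 2}(\ket{s_1} - \ket{s_2})$, on which $\mathcal{T}$ acts by $\mathcal{T}\ket{\psi} = \frac{1}{\sqrt 2}(\ket{s'} - \ket{s'}) = 0$, so that the two branches interfere destructively. This vector lies in the domain where $\mathcal{T}$ is defined, since both $T(s_1)$ and $T(s_2)$ exist.

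The contradiction then follows in one step from the embedding axiom. Taking $\ket{\psi'} = \mathcal{T}\ket{\psi} = 0$, the definition demands $U(\ket{\psi}\otimes\ket{\eta_0}) = \ket{\psi'}\otimes\ket{\eta'} = 0\otimes\ket{\eta'} = 0$ for some norm-$1$ state $\ket{\eta'}$. But $U$ is unitary, hence norm-preserving, so $\norm{U(\ket{\psi}\otimes\ket{\eta_0})} = \norm{\ket{\psi}\otimes\ket{\eta_0}} = \norm{\ket{\psi}}\,\norm{\ket{\eta_0}} = 1 \neq 0$, a contradiction. Equivalently and more generally, applying the axiom to an arbitrary $\ket{\psi}$ gives $\norm{\ket{\psi}} = \norm{U(\ket{\psi}\otimes\ket{\eta_0})} = \norm{\ket{\psi'}\otimes\ket{\eta'}} = \norm{\mathcal{T}\ket{\psi}}$, forcing $\mathcal{T}$ to be norm-preserving; a norm-preserving linear map has trivial kernel and is therefore injective, and $T$ inherits injectivity since $T(s_1) = T(s_2)$ would place the nonzero vector $\ket{s_1} - \ket{s_2}$ in $\ker\mathcal{T}$. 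Either route establishes the contrapositive.

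I expect the main obstacle to be conceptual rather than computational: recognizing that the correct test vector is a superposition whose image under $\mathcal{T}$ is annihilated, or dually that unitarity of $U$ together with the norm-$1$ constraint on $\ket{\eta'}$ pins the norm of the data register to that of its image. One must also be careful about two definitional points -- that the universal quantifier ranges over all of $\mathcal{H}_S$, so $\ket{\psi'} = 0$ is a legitimate instance, and that the fixed, data-independent auxiliary input $\ket{\eta_0}$ is precisely what prevents the embedding from ``cheating'' by pre-encoding which branch was taken. A parallel argument routed through entanglement rather than norms -- observing that unitarity forces $\ket{\eta_1}\perp\ket{\eta_2}$, so that no single $\ket{\eta'}$ can reproduce the $\mathcal{T}$-image across a superposition that also spans a disjoint branch -- reaches the same conclusion and ties directly to the narrative of \Cref{sec:entanglement-problem}.
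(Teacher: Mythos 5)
Your proposal is correct and follows essentially the same argument as the paper: the same witness state $\ket{\psi} = \frac{1}{\sqrt{2}}(\ket{s_1} - \ket{s_2})$ annihilated by $\mathcal{T}$ through destructive interference, and the same contradiction with the norm-preservation of the unitary $U$ (your contrapositive framing and the paper's proof by contradiction are interchangeable here). The additional routes you sketch (norm-preservation of $\mathcal{T}$ forcing trivial kernel, and the orthogonality $\ket{\eta_1} \perp \ket{\eta_2}$ forced by inner-product preservation) are valid elaborations of the same core idea.
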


\begin{proof}
Assume that a quantum embedding $(\mathcal{H}_L, U, \ket{\eta_0})$ exists for the transition system $(S, T)$, where $T$ is not injective. Let $s_1$ and $s_2$ in $S$ with $s_1 \neq s_2$ be such that $T(s_1) = T(s_2) = s'$. Let $\ket{\psi} = \frac{1}{\sqrt{2}}(\ket{s_1} - \ket{s_2})$. By linearity of $\mathcal{T}$, we have $\mathcal{T}\ket{\psi} = \frac{1}{\sqrt{2}}(\mathcal{T}\ket{s_1} - \mathcal{T}\ket{s_2}) = \frac{1}{\sqrt{2}}(\ket{s'} - \ket{s'}) = 0 \eqcolon \ket{\psi'}$. By \Cref{def:quantum-embedding}, we therefore have $U(\ket{\psi} \otimes \ket{\eta_0}) = \ket{\psi'} \otimes \ket{\eta'} = 0 \otimes \ket{\eta'} = 0$ for some $\ket{\eta'} \in \mathcal{H}_L$. This is a contradiction to the assumption that $U$ is unitary and thus norm-preserving.
\end{proof}

In other words, even though one can embed the purely classical component of a non-injective transition system into a unitary operator, one cannot successfully realize the desired semantics of such a transition system over superpositions of data. There exists no general scheme that correctly lifts a programming abstraction with non-injective transition semantics into superposition.

An immediate corollary of the theorem is that a quantum $\lambda$-calculus featuring superpositions of $\lambda$-terms is not physically realizable, because $\beta$-reduction is not injective --- the distinct terms $\lambda x. x$ and $(\lambda x. x)(\lambda x. x)$ reduce to the same term $\lambda x. x$.
This result formalizes an informal claim of~\citet{vantonder2004}, and we give a detailed instantiation of the theorem to this case in~\Cref{sec:app-lambda}.

\subsection{Synchronization} \label{sec:synchronization}

We next present \emph{synchronization}, the property that the state of data is separable from the state of control flow in the final output of a computation. Together, injectivity and synchronization form a complete specification of the forms of control flow that are correctly realizable in superposition.

We consider transition systems in which $T$ is injective and $S = C \times D$, where $C$ and $D$ denote the control state and data state of the model of computation respectively.
By control state, we refer to the component of the machine state that is not an explicit output of the algorithm being implemented, and will be discarded at the end of computation to leave behind the data.

We define a transition system with $T : C \times D \to C \times D$ as having \emph{control flow dependent on data} when there exist $c, c'_1, c'_2 \in C$ and $d_1, d'_1, d_2, d'_2 \in D$ such that $T(c, d_1) = (c'_1, d'_1)$ and $T(c, d_2) = (c'_2, d'_2)$ and $d_1 \neq d_2$ and $c'_1 \neq c'_2$. In other words, the evolution of the control state may depend on the value of data, and the function $T$ does not factor into two independent transitions $C \to C$ and $C \times D \to D$.

\begin{example}
A classical register machine has a program counter $\texttt{pc}$ and a set $R$ of data registers, in which the semantics of an instruction is a function $\{\texttt{pc}\} \times R \to \{\texttt{pc}\} \times R$. Here, $C$ is the state of \texttt{pc} and $D$ is the state of $R$.
Control flow dependent on data manifests in the conditional jump instruction, whose semantics updates \texttt{pc} based on the value of $R$.
\end{example}

Given a termination time and initial quantum states of control and data, a transition system executes a computation to produce a corresponding final quantum state over control and data:

\begin{definition}[Final State] \label{def:final-machine-state}
  Given an initial control state $\ket{\kappa_0} \in \mathcal{H}_C$, a termination time $t \in \mathbb{N}$, and an input data state $\ket{\delta_0} \in \mathcal{H}_D$, the transition system $(S, T)$ performs the unitary $\mathcal{T}$ (defined in \Cref{sec:main-theorem}) for $t$ iterations on the state $\ket{\kappa_0} \otimes \ket{\delta_0}$ to produce the \emph{final state} $\ket{\psi} \in \mathcal{H}_C \otimes \mathcal{H}_D$.
\end{definition}

When control flow can depend on data, the final $\ket{\psi}$ can be an entangled state in general. However, as established in \Cref{sec:entanglement-problem}, the presence of entanglement would make the output data incorrect for use by a quantum algorithm. To prevent this issue, the property of synchronization specifies that in the final output, the state of data is always separable from the state of control flow:

\begin{definition}[Synchronization] \label{def:synchronization}
  The transition system $(S, T)$ is \emph{synchronized} at initial control state $\ket{\kappa_0}$ and termination time $t$ if there exists some final control state $\ket{\kappa'} \in \mathcal{H}_C$ such that for any input $\ket{\delta_0} \in \mathcal{H}_D$, there exists some output $\ket{\delta'} \in \mathcal{H}_D$ such that the final state of the machine after executing $t$ instructions, as defined in \Cref{def:final-machine-state}, is $\ket{\psi} = \ket{\kappa'} \otimes \ket{\delta'}$.
\end{definition}

To show that a transition system is synchronized, by linearity it suffices to show that the final value of $\ket{\kappa'}$ is fixed across all values of $\ket{\delta_0}$ in computational basis, i.e.\ classical input data alone.

Terminologically, we refer to a program as synchronized when the transition system and initial control state corresponding to the program are synchronized. This term is agnostic to whether the model of computation hardcodes its program in the transition function $T$, as does the standard Turing machine, or whether it accepts its program in $\ket{\kappa_0}$, as does the universal Turing machine.

A synchronized transition system expresses a computation over the data state that is a unitary operator corresponding to a quantum circuit without measurement. The converse also holds --- if a transition system is not synchronized, then it does not express a unitary computation over the data.

\begin{theorem}[Soundness]
If a transition system is injective and synchronized, then given any input data $\ket{\delta_0}$, it produces a final state after $t$ instructions in which the output data $\ket{\delta'}$ is separable from the control state. Furthermore, its mapping from input data to output data is a unitary operator.
\end{theorem}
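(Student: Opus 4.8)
The plan is to split the statement into its two assertions and dispatch them in increasing order of difficulty. The separability claim is essentially immediate: by \Cref{def:synchronization}, synchronization supplies a \emph{fixed} control state $\ket{\kappa'}$ such that, for every input $\ket{\delta_0}$, the final state of \Cref{def:final-machine-state} is the product $\ket{\psi} = \ket{\kappa'} \otimes \ket{\delta'}$, which is by definition separable from the control state. So the substance of the theorem lies in the ``furthermore'' clause, namely that the induced map $\ket{\delta_0} \mapsto \ket{\delta'}$ is unitary.

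First I would package the whole evolution as a single linear operator. By \Cref{def:final-machine-state} the final state is $\mathcal{T}^t(\ket{\kappa_0} \otimes \ket{\delta_0})$, and since $T$ is injective the physical-realizability discussion in \Cref{sec:main-theorem} gives that $\mathcal{T}$, hence $\mathcal{T}^t$, is unitary. I would define $W : \mathcal{H}_D \to \mathcal{H}_C \otimes \mathcal{H}_D$ by $W\ket{\delta_0} = \mathcal{T}^t(\ket{\kappa_0} \otimes \ket{\delta_0})$; this is linear, being the composition of the linear embedding $\ket{\delta_0} \mapsto \ket{\kappa_0} \otimes \ket{\delta_0}$ with $\mathcal{T}^t$. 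Working on the computational basis $\{\ket{d} \mid d \in D\}$ of $\mathcal{H}_D$ — which suffices by the linearity remark following \Cref{def:synchronization} — synchronization yields $W\ket{d} = \ket{\kappa'} \otimes \ket{\delta'_d}$ with the \emph{same} $\ket{\kappa'}$ for every $d$. I would then set $V\ket{d} = \ket{\delta'_d}$ and extend linearly, so that $W\ket{\delta_0} = \ket{\kappa'} \otimes V\ket{\delta_0}$ holds on all of $\mathcal{H}_D$.

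Next I would verify that $V$ is unitary. Norm preservation follows by chaining the equalities $\|\ket{\delta_0}\|^2 = \|\ket{\kappa_0} \otimes \ket{\delta_0}\|^2 = \|\mathcal{T}^t(\ket{\kappa_0} \otimes \ket{\delta_0})\|^2 = \|\ket{\kappa'} \otimes V\ket{\delta_0}\|^2 = \|V\ket{\delta_0}\|^2$, using that $\mathcal{T}^t$ preserves norms and that $\ket{\kappa_0}$ and $\ket{\kappa'}$ are normalized control states. A linear norm-preserving map preserves inner products by polarization, so $V$ is an isometry; since $\mathcal{H}_D$ is finite-dimensional (as assumed throughout \Cref{sec:main-theorem}) and $V$ maps it to itself, the isometry is injective, hence surjective, hence unitary.

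The main obstacle I expect is not any single calculation but the care needed to \emph{extract} a well-defined linear $V$ from synchronization. The definition only asserts the \emph{existence} of an output $\ket{\delta'}$ per input, so I must argue $\ket{\delta'}$ is uniquely determined — this is tensor-product cancellation, valid precisely because $\ket{\kappa'}$ is a fixed norm-$1$ (so nonzero) state — and that the pointwise assignment is genuinely linear. There is also a small normalization bookkeeping point: a tensor factorization fixes $\ket{\kappa'}$ and $\ket{\delta'}$ only up to a reciprocal scalar, so I would fix $\ket{\kappa'}$ to norm $1$ by absorbing that scalar into $\ket{\delta'}$, which is exactly what makes $\ket{\kappa'}$ factor out uniformly across inputs. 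That uniform factorization \emph{is} the content of synchronization; once it is in hand, unitarity of $V$ is routine.
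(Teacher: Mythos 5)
Your proposal is correct and follows essentially the same route as the paper's proof: expand the input in the computational basis, use synchronization to obtain the same fixed control state $\ket{\kappa'}$ on each basis input, and invoke linearity of $\mathcal{T}^t$ to get separability of the final state together with a linear, norm-preserving map on the data. The additional care you take --- well-definedness of $V$ by tensor cancellation against the nonzero $\ket{\kappa'}$, and the isometry-plus-finite-dimension argument for invertibility --- merely fills in details the paper leaves implicit (the paper's definition of unitary as linear and norm-preserving makes its terser final step sufficient).
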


\begin{proof}
If $(S, T)$ is synchronized, then given any $\ket{\delta_0} = \sum_{i} \gamma_i \ket{\delta_i} \in \mathcal{H}_D$ such that $\mathcal{T}^t(\ket{\kappa_0} \otimes \ket{\delta_i}) = \ket{\kappa'} \otimes \ket{\delta'_i}$, we have $\ket{\psi} = \mathcal{T}^t(\ket{\kappa_0} \otimes \sum_{i} \gamma_i \ket{\delta_i}) = \sum_{i} \gamma_i (\ket{\kappa'} \otimes \ket{\delta'_i}) = \ket{\kappa'} \otimes \ket{\delta'}$ where $\ket{\delta'} = \sum_{i} \gamma_i \ket{\delta'_i}$. The mapping $\ket{\delta_0} = \sum_{i} \gamma_i \ket{\delta_i} \mapsto \ket{\delta'} = \sum_{i} \gamma_i \ket{\delta'_i}$ is linear and norm-preserving as $\mathcal{T}$ is unitary.
\end{proof}

\begin{theorem}[Completeness]
If a transition system is not synchronized, then either there exists some input $\ket{\delta_0}$ for which it produces a final state $\ket{\psi}$ after $t$ instructions in which data and control are entangled, or its mapping from input data to output data is not injective and hence not unitary.
\end{theorem}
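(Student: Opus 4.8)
The plan is to prove the contrapositive of the Soundness theorem in disguise: assume the transition system is injective (so that $\mathcal{T}^t$ is a genuine unitary) but \emph{not} synchronized, and derive that one of the two stated failure modes must occur. The key observation is that synchronization, by \Cref{def:synchronization}, is precisely the statement that there exists a single fixed final control state $\ket{\kappa'}$ such that every classical input $\ket{\delta_i}$ evolves to a separable state $\ket{\kappa'} \otimes \ket{\delta'_i}$. Negating this gives me a concrete handle to work with.

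First I would unfold the negation of \Cref{def:synchronization} carefully. Synchronization fails exactly when there is \emph{no} fixed $\ket{\kappa'}$ that works uniformly across all inputs. I would split into the two cases that naturally arise. In the first case, there exists some individual classical input $\ket{\delta_i}$ whose final state $\mathcal{T}^t(\ket{\kappa_0} \otimes \ket{\delta_i})$ is itself not separable into $\ket{\kappa'} \otimes \ket{\delta'}$ for \emph{any} choice of $\ket{\kappa'}$, $\ket{\delta'}$; this is already an entangled final state, giving the first disjunct directly with $\ket{\delta_0} := \ket{\delta_i}$. In the second case, every classical input individually yields a separable final state $\ket{\kappa_i'} \otimes \ket{\delta_i'}$, but the control components $\ket{\kappa_i'}$ are not all equal (up to global phase) across different $i$ — otherwise we could pull out a common $\ket{\kappa'}$ and the system would be synchronized after all.

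The heart of the argument is this second case, and it is where I would spend the most effort. I would pick two classical inputs $\ket{\delta_1}, \ket{\delta_2}$ with distinct final control states $\ket{\kappa_1'} \neq \ket{\kappa_2'}$ (not merely differing by phase), and form the superposition input $\ket{\delta_0} = \frac{1}{\sqrt{2}}(\ket{\delta_1} + \ket{\delta_2})$. By linearity, its final state is $\frac{1}{\sqrt{2}}(\ket{\kappa_1'} \otimes \ket{\delta_1'} + \ket{\kappa_2'} \otimes \ket{\delta_2'})$. I would then argue that this state is entangled: since $\ket{\kappa_1'}$ and $\ket{\kappa_2'}$ are linearly independent control states, no factorization $\ket{\kappa'} \otimes \ket{\delta'}$ can reproduce it unless the two data parts coincide in a degenerate way, and the standard Schmidt-rank or partial-trace argument shows entanglement. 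This yields the first disjunct again. The subtlety — and the main obstacle — is handling the boundary where the control states differ but the data parts conspire so that the combined map on data happens to be non-injective rather than entangling; this is the source of the \emph{second} disjunct. I would show that if injectivity of the data map $\ket{\delta_0} \mapsto \ket{\delta'}$ fails, it is exactly because two distinct data inputs collapse to proportional outputs, which (given $\mathcal{T}^t$ is unitary and hence injective on the full space) forces their control states to have absorbed the distinguishing information — and that absorbed information, not being discarded into a fixed $\ket{\kappa'}$, is precisely what entanglement or non-injectivity of the restricted map records.

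The cleanest framing, which I would adopt to avoid case-chasing pitfalls, is to treat the well-definedness of the induced data map as the pivot. Define the candidate data map $V : \ket{\delta_0} \mapsto \ket{\delta'}$ by partial-trace or by projecting onto the final control component. If synchronization fails, then either $V$ is not well-defined as a function into $\mathcal{H}_D$ alone (because the final state refuses to factor, i.e.\ entanglement — the first disjunct), or $V$ is well-defined but fails to preserve the dimension/inner-product structure that unitarity on the full $\mathcal{H}_C \otimes \mathcal{H}_D$ would otherwise guarantee, forcing non-injectivity (the second disjunct). I expect the careful bookkeeping of \emph{up-to-global-phase} equivalence of control states to be the most delicate point, since two final control states that differ only by a global phase should count as ``the same'' for synchronization but a relative phase between branches does genuinely matter; I would state and use the convention from \Cref{sec:background} that global phase is physically irrelevant, while relative phase is not, to resolve this cleanly.
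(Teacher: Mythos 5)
Your proposal is correct and follows essentially the same route as the paper's proof: negate synchronization to obtain two inputs whose final control states differ, form their equal superposition, and use linearity of $\mathcal{T}^t$ to conclude that the final state $\frac{1}{\sqrt{2}}(\ket{\kappa_A} \otimes \ket{\delta'_A} + \ket{\kappa_B} \otimes \ket{\delta'_B})$ is entangled unless the output data states coincide, in which case two distinct data inputs share an output and the induced data map is non-injective. If anything, your treatment is more careful than the paper's two-line proof, which silently folds in the case where a single input already produces an entangled final state (your first case, giving the first disjunct immediately) and glosses over the proportional-versus-equal distinction for control and data states that you explicitly flag as the delicate phase-bookkeeping point.
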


\begin{proof}
If $(S, T)$ is not synchronized, then there exist $\ket{\delta_A}, \ket{\delta_B} \in \mathcal{H}_D$ with unit norm such that $\ket{\delta_A} \neq \ket{\delta_B}$, $\mathcal{T}^t(\ket{\kappa_0} \otimes \ket{\delta_A}) = \ket{\kappa_A} \otimes \ket{\delta'_A}$, $\mathcal{T}^t(\ket{\kappa_0} \otimes \ket{\delta_B}) = \ket{\kappa_B} \otimes \ket{\delta'_B}$, and $\ket{\kappa_A} \neq \ket{\kappa_B}$. Then, $\mathcal{T}^t (\ket{\kappa_0} \otimes \frac{1}{\sqrt{2}}(\ket{\delta_A} + \ket{\delta_B})) = \frac{1}{\sqrt{2}}(\ket{\kappa_A} \otimes \ket{\delta'_A} + \ket{\kappa_B} \otimes \ket{\delta'_B})$, which is entangled unless $\ket{\delta'_A} = \ket{\delta'_B}$.
\end{proof}

Together, injectivity and synchronization provide a complete specification for the forms of control flow in superposition that are correctly realizable on a quantum computer --- the programming abstractions must have injective semantics, and the program must be synchronized.

\section{Quantum Control Machine} \label{sec:machine}

In this section, we present the \emph{quantum control machine}, an instruction set architecture for quantum programming with control flow in superposition. This architecture provides for the first time both a program counter in superposition and a sound means of manipulating the program counter.

To satisfy the specification for correctly realizable control flow in superposition, the machine uses variants of conditional jump introduced by classical reversible architectures~\citep{thomsen2012,axelsen2007} that possess inherently injective semantics and can be used to build synchronized programs.
On top of that concept, the quantum control machine adds the ability to execute unitary operators that create and manipulate superpositions of data such as the Hadamard gate, and provides a sound representation of a program counter that exists in superposition.

\subsection{Architectural Overview}

The quantum control machine operates over word-sized quantum registers.
Let $k$, the system word size, be large enough for a word to represent a machine integer or an encoding of an instruction.

\paragraph{Program Encoding}
A \emph{program} is a sequence of instructions of length $\ell$ where $0 < \log_2 \ell < k$. We denote by $\iota_i$ the word-size encoding of the $i$th instruction of the program.

As is the case for a classical Turing machine, a specific instance of the quantum control machine executes a specific fixed program.
The machine obtains access to the instruction sequence through a unitary operator $\texttt{P}$, defined such that given integer $1 \le i \le \ell$, we have $\texttt{P}\ket{i, 0} = \ket{i, \iota_i}$.

In principle, the operator $\texttt{P}$ and its inverse may be physically realized via any technology to make classical data available for read-only use in superposition, such as those in the works of~\citet{babbush2018,berry2019,low2018,giovannetti2008}.

\paragraph{Data Registers}
The machine provides a finite number of quantum registers -- binary-encoded qubit arrays of word size $k$ -- that are indexed and addressable by classical names.
The $n$ data registers are named \texttt{r1}, \texttt{r2}, \ldots, \texttt{r}$n$ and are initialized to 0.

\paragraph{Control Unit}
The machine contains three control registers: the program counter \texttt{pc}, initially 0; the branch control register \texttt{br}, signed and initially 1; and the instruction register \texttt{in}, initially 0.

\subsection{Machine Execution}

Execution occurs in cycles consisting of \emph{fetch}, \emph{execute}, and \emph{retire} stages. Each of the three stages has as its mathematical semantics a unitary operator over the Hilbert space of machine states:

\begin{itemize}
\item The fetch stage loads the next instruction to execute. First, it adds the value of \texttt{br} to \texttt{pc}, using a unitary circuit construction for arithmetic~\citep{draper2000,cheng2002,islam2009,rines2018,cho2020}. It then executes the operator $\texttt{P}$ defined above on $\ket{\texttt{pc}, \texttt{in}}$ to load $\iota_\texttt{pc}$, the instruction at the new $\texttt{pc}$, into register \texttt{in}.

As is the case for a classical architecture, the behavior is implementation-defined when the program counter does not point to a valid instruction, i.e.\ when $\texttt{pc} = 0$ or $\texttt{pc} > \ell$.

\item The execute stage updates the registers \texttt{r1} through \texttt{r}$n$ and \texttt{br} according to the semantics of the instruction in the \texttt{in} register, which is specified in \Cref{sec:isa}.

\item The retire stage executes the inverse of \texttt{P} on $\ket{\texttt{pc}, \texttt{in}}$, so that after a cycle, \texttt{in} has value 0.
\end{itemize}

\subsection{Instruction Set} \label{sec:isa}
\begin{table}
  \caption{Core instruction set for the quantum control machine. The notation $x_i$ denotes the $i$th bit of $x$ and $x_{\setminus i}$ denotes $x$ with $i$th bit set to zero. The notation $p$ denotes a signed immediate offset and the notation $U$ denotes a built-in unitary operator. The expression $[y = 0]$ evaluates to 1 if $y = 0$ or 0 otherwise.}
  \resizebox{0.65\textwidth}{!}{
  \begin{tabular}{ r@{\hspace{0.75em}}l@{\hspace{0.75em}}l }
  \toprule
  & Instruction & Semantics \\
  \midrule
  \emph{No-op} & \texttt{nop} & (identity) \\
  \midrule
  \emph{Unitary} & \texttt{u}\ $U$ \texttt{r}$a$ & ${\ket{x}}_{\texttt{r}a} \mapsto {(U \ket{x})}_{\texttt{r}a} \quad U \in \{\texttt{H}, \texttt{NOT}\}$ \\
  \midrule
  \emph{Swap} & \texttt{swap} \texttt{r}$a$\ \texttt{r}$b$ & $\ket{x}_{\texttt{r}a} \ket{y}_{\texttt{r}b} \mapsto \ket{y}_{\texttt{r}a} \ket{x}_{\texttt{r}b}$ \\
  \emph{Get Bit} & \texttt{get} \texttt{r}$a$\ \texttt{r}$b$\ \texttt{r}$c$\ & $\ket{i}_{\texttt{r}a} \ket{0}_{\texttt{r}b} \ket{x}_{\texttt{r}c}\mapsto \ket{i}_{\texttt{r}a} \ket{x_i}_{\texttt{r}b} \ket{x_{\setminus i}}_{\texttt{r}c}$ \\
  \midrule
  \emph{Add} & \texttt{add}\ \texttt{r}$a$\ \texttt{r}$b$ & $\ket{x}_{\texttt{r}a} \ket{y}_{\texttt{r}b} \mapsto \ket{x + y}_{\texttt{r}a} \ket{y}_{\texttt{r}b}$ \\
  & \texttt{add}\ \texttt{r}$a$\ \texttt{r}$a$ & $\ket{x}_{\texttt{r}a} \mapsto \ket{x \times 2}_{\texttt{r}a}$ \\
  \emph{Multiply} & \texttt{mul}\ \texttt{r}$a$\ \texttt{r}$b$ & $\ket{x}_{\texttt{r}a} \ket{y}_{\texttt{r}b} \mapsto \ket{x \times y}_{\texttt{r}a} \ket{y}_{\texttt{r}b} \quad y \neq 0$ \\
  & \texttt{mul}\ \texttt{r}$a$\ \texttt{r}$a$ & $\ket{x}_{\texttt{r}a} \mapsto \ket{x^2}_{\texttt{r}a}$ \\
  \midrule
  \emph{Jump} & \texttt{jmp}\ $p$ & $\ket{x}_\texttt{br} \mapsto \ket{x + p}_\texttt{br}$ \\
  \emph{Conditional Jump} & \texttt{jz}\ $p$\ \texttt{r}$a$ & $\ket{x}_{\texttt{br}} \ket{y}_{\texttt{r}a} \mapsto \ket{x + p \times [y = 0]}_\texttt{br}\ket{y}_{\texttt{r}a}$ \\
  \emph{Indirect Jump} & \texttt{jmp*}\ \texttt{r}$a$ & $\ket{x}_{\texttt{br}} \ket{y}_{\texttt{r}a} \mapsto \ket{x + y}_\texttt{br}\ket{y}_{\texttt{r}a}$ \\
  \bottomrule
  \end{tabular}%
  }
  \label{tbl:isa}
\end{table}
In \Cref{tbl:isa}, we present the core instruction set and the semantics of each instruction, which updates the data registers $\texttt{r}1, \ldots, \texttt{r}n$ and the control register \texttt{br}. The control registers \texttt{pc} and \texttt{in} are updated only by the fetch and retire stages, and not explicitly by any instruction.

\paragraph{No-op}
In \Cref{tbl:isa}, the first instruction \texttt{nop} is a no-op that simply causes the machine to advance to the next cycle, and its semantics is the identity operator over the machine state.

\paragraph{Unitary Gates}
The next instruction $\texttt{u}$ executes a unitary quantum logic gate, specified by its encoded name $U$ from a fixed set of built-in primitive gates, on the specified register $\texttt{r}a$.

Any unitary operator can be approximated to any degree of precision using single-qubit gates that can be controlled by arbitrarily many qubits~\citep{kitaev1997}. Among the gates that are sufficient for this purpose, the choice of gate set is arbitrary in principle.\footnotemark{}
In this work, we use the single-qubit gate set consisting of $\{\texttt{H}, \texttt{NOT}\}$ --- Hadamard and bit-flip gates on the zeroth qubit of the register, whose controlled variants are sufficient to approximate any unitary operator~\citep{shi2002}.%
\footnotetext{As a technical note, since it is not possible to construct a controlled-$U$ gate given only the ability to apply an unknown gate $U$~\citep{nielsen1997,araujo2014}, the set of gates of the machine must at a minimum be fixed and known.}

We note that in the quantum control machine, because the choice of instruction may be controlled on quantum data by the program counter, multi-qubit gates such as \texttt{CNOT} need not be primitive, and can instead be implemented through control flow instructions.

\paragraph{Data and Arithmetic}
The $\texttt{swap}$ instruction swaps the contents of registers $\texttt{r}a$ and $\texttt{r}b$.
The $\texttt{get}$ instruction extracts the $i$th qubit of $\texttt{r}c$ into $\texttt{r}b$ and leaves 0 in place of that qubit in $\texttt{r}c$, where $i$ is a quantum integer stored in the register $\texttt{r}a$. In principle, the \texttt{get} instruction may be realized in hardware via the quantum analogue of random-access memory, as presented by~\citet{giovannetti2008,paler2020,arunachalam2015,matteo2020}.
For a detailed description of the arithmetic instructions \texttt{add} and \texttt{mul} for addition and multiplication and the behavior of these operations under integer overflow, please see \Cref{sec:overflow}.

\paragraph{Control Flow}
The jump instructions of the machine are from classical reversible architectures~\citep{thomsen2012,axelsen2007}.
The unconditional jump instruction $\texttt{jmp}$ adds the signed immediate value $p$ to the branch control register $\texttt{br}$, producing a relative jump. The magnitude of the jump stored by \texttt{br} persists across cycles until reset by another jump instruction.

The conditional jump instruction $\texttt{jz}$ adds $p$ to \texttt{br} if the value of $\texttt{r}a$ is 0, and has no effect on \texttt{br} otherwise.
The indirect jump instruction $\texttt{jmp*}$ adds the value of $\texttt{r}a$ to \texttt{br}.

\paragraph{Derived Instructions}
Each instruction in the table also has a corresponding reverse instruction prefixed by the character \texttt{r}, whose semantics inverts input and output. For example, the instruction $\texttt{ru}$ executes the operator $U^\dagger$ on $\texttt{r}a$. The exceptions are $\texttt{nop}$ and $\texttt{swap}$, which are self-inverse.

In a quantum computation, reverse instructions are used to uncompute (\Cref{sec:uncomputation}) a register after it is no longer useful, by reversing the sequence of operations that produced its value and restoring it to 0. Reverse instructions must be used for this purpose as no instruction can in general erase a register from an arbitrary value to 0, which is not mathematically a unitary operator.

The core instruction set is readily extended with other derived instructions. For example, we use \texttt{jnz} to denote a jump-if-not-zero instruction, the opposite of \texttt{jz}. In this work, we also permit immediates in place of registers and named labels in place of offsets. Named labels may be translated to a signed offset relative to a \texttt{jmp} instruction that is negated for a \texttt{rjmp} instruction.

\subsection{Termination and Measurement} \label{sec:measurement}

Following \Cref{sec:theory}, we represent the machine state as the product of the control state $C$ containing \texttt{pc}, \texttt{br}, and \texttt{in}, and the data state $D$ containing all of the data registers.
We denote the unitary operator corresponding to one execution cycle by $\texttt{E} : \mathcal{H}_C \otimes \mathcal{H}_D \to \mathcal{H}_C \otimes \mathcal{H}_D$. The machine repeats for a total of $t \in \mathbb{N}$ cycles, so that $\texttt{E}^t$ describes the overall evolution of the machine state.
After $t$ cycles, the machine discards, or equivalently measures, the registers \texttt{pc}, \texttt{br}, and \texttt{in}. If desired, an external process then has the opportunity to measure the states of the data registers.

We follow established convention~\citep{deutsch85,bv1997} in performing all measurement at the end of computation, so that the machine evolution is realizable via unitary logic gates alone.
The principle of deferred measurement states that one final measurement is sufficient to express any quantum computation~\citep{nielsen_chuang_2010}.
Though mid-computation measurement could be added as an extension of the design, it is not strictly necessary to express any computation and does not in general produce output data with superposition intact.

\subsection{Synchronization}

As shown in \Cref{sec:machine-example}, the quantum control machine enables the construction of synchronized programs.
We now instantiate the definition of synchronization (\Cref{sec:synchronization}) for this machine.

\paragraph{Inputs and Outputs}
We identify a program with two sets $Z_\textrm{in}$ and $Z_\textrm{out}$ of data registers that are assumed to be 0 at its start and end respectively.
Let $\mathcal{I}$ be the subspace of $\mathcal{H}_D$ where all registers in $Z_\textrm{in}$ are 0 --- the \emph{input} states of $C$.
Similarly, let $\mathcal{O}$ be where those in $Z_\textrm{out}$ are 0 --- its \emph{output} states.

\paragraph{Final State}
Instantiating \Cref{def:final-machine-state}, we say that given an input state $\ket{\delta_0} \in \mathcal{I}$, and a termination time $t \in \mathbb{N}$, the quantum control machine performs the unitary $\texttt{E}^t$ (defined in \Cref{sec:measurement}) on the state $\ket{\texttt{pc} \sep 0, \texttt{br} \sep 1, \texttt{in} \sep 0} \otimes \ket{\delta_0}$ to produce the final state $\ket{\psi} \in \mathcal{H}_C \otimes \mathcal{O}$.

\paragraph{Synchronization}
Instantiating \Cref{def:synchronization}, we say that the machine is synchronized at time $t$ if there exists $x$ so that for any $\ket{\delta_0} \in \mathcal{I}$, there exists $\ket{\delta'} \in \mathcal{O}$ such that $\ket{\psi} = \ket{\textup{\texttt{pc}} \sep x, \textup{\texttt{br}} \sep 1, \textup{\texttt{in}} \sep 0} \otimes \ket{\delta'}$.

\paragraph{Expressiveness}
In principle, one can express any unitary quantum computation as a synchronized program for the quantum control machine given some appropriate set of primitive unitary gates. Any unitary operator can be approximated to arbitrary precision as a polynomial-length circuit in our gate set of Hadamard and arbitrarily controllable \texttt{NOT} gates~\citep{shi2002}, the latter of which can be implemented by a synchronized program using conditional jumps.

In practice, a developer can verify that a program built from structured branching or iteration constructs is synchronized without use of detailed mathematical reasoning, through two insights:

\begin{itemize}
\item To verify that all conditional branches are synchronized, one needs only to check that the target of each conditional jump instruction in the program is a reverse jump that points back to the original jump and has the same semantic condition as the original jump.
\item To verify that all bounded loops are synchronized, one needs only to check that the execution time of the overall program is independent of each quantum variable in the program, which does not require any specific information about the values of the quantum variables.
\end{itemize}

In the next section, we illustrate how to use the above principles to build synchronized programs for a variety of high-level control flow constructs as found in quantum algorithms.

\section{Case Studies} \label{sec:case-study}

In this section, we illustrate how a developer uses the abstractions for control flow in superposition provided by the quantum control machine to express quantum algorithms. Specifically, we show how a developer can implement imperative abstractions for control flow -- analogues of classical \texttt{for}, \texttt{if}, and \texttt{switch} -- as synchronized programs. The case study demonstrates how a developer can represent control flow patterns from existing quantum algorithms and programming languages in a uniform way by correctly manipulating a program counter in superposition.

We have implemented a simulator for the quantum control machine, which accepts a program, input, and runtime $t$, and outputs the machine state after $t$ steps. Implementations of all case study examples as executable programs are packaged with the simulator in the artifact of this paper.

\subsection{Iteration and Phase Estimation} \label{sec:iteration-case-study}

The quantum control machine enables a program to execute a loop for a quantum number of iterations bounded by a classical value, which is integral to the algorithmic building block of quantum phase estimation~\citep{kitaev1995} as used in algorithms for factoring~\citep{shor1997}, simulation~\citep{abrams1997}, and linear algebra~\citep{harrow2009,abrams1999}.

\paragraph{Exponentiation}
In \Cref{fig:exponent-classical}, we present a classical assembly program for exponentiation. Given \texttt{x} and \texttt{y}, it stores $\texttt{x}^\texttt{y}$ into the register \texttt{res}. To do so, it repeatedly decrements a copy of \texttt{y}, multiplying \texttt{res} by \texttt{x} on each iteration using the quantum analogue of a \texttt{for}-loop (\Cref{ex:iteration}).

For ease of understanding, in this example we store the output in an auxiliary register rather than in place and use repeated multiplication rather than squaring as is typical, which would be more efficient but also more difficult to understand as an example program.
\begin{figure}
\centering\vspace*{-0.5em}
\begin{minipage}[t]{.45\textwidth}
\begin{lstlisting}[language={[x86masm]Assembler},morekeywords={rjmp,rjz,rjnz,swap,rmul,radd}]
    add res $1    ; copy 1 into res
    add r1 y      ; copy y into r1
l1: jz l2 r1      ; if r1 == 0, break
    mul res x     ; multiply res by x
    radd r1 $1    ; decrement r1
    jmp l1        ; goto loop start
l2: nop           ; end of program
\end{lstlisting}
\end{minipage}%
\hspace*{0.8em}%
\begin{minipage}[t]{.49\textwidth}
\begin{lstlisting}[language={[x86masm]Assembler},morekeywords={rjmp,rjz,rjnz,swap,rmul,radd,rjne}]
    add res $1    ; copy 1 into res
    add r1 y      ; copy y into r1
l1: rjne l3 r1 y  ; if r1 != y, come from l3
l2: jz l4 r1      ; if r1 == 0, break
    mul res x     ; multiply res by x
    radd r1 $1    ; decrement r1
l3: jmp l1        ; goto loop start
l4: rjmp l2       ; come from l2
\end{lstlisting}
\end{minipage}

\setlength{\abovecaptionskip}{3pt}
\setlength{\belowcaptionskip}{-0.52em}
\begin{minipage}[t]{.5\textwidth}
\caption{Classical program for exponentiation.} \label{fig:exponent-classical}
\end{minipage}%
\begin{minipage}[t]{.5\textwidth}
\caption{Exponentiation with reverse jumps.} \label{fig:exponent-proposed}
\end{minipage}
\end{figure}

\paragraph{Adapted Program}
Adapting this program to the quantum control machine is done by 1) adapting its control flow to use reversible jumps, and then 2) ensuring that the program is synchronized.

The first step is not the main challenge. By leveraging prior work~\citep{yokoyama2008,axelsen2011} that presents reversible variants of structured \texttt{if} and \texttt{while} constructs, we may straightforwardly insert the appropriate reverse jumps. In \Cref{fig:exponent-proposed}, we present the adapted program in which we insert a corresponding reverse jump as the target of every conditional or backward jump.

The main challenge is the second step of ensuring that the resulting program is synchronized, which in the example means that the final values of \texttt{pc} and \texttt{br} are independent of \texttt{x} and \texttt{y}.
We can see this challenge by executing the program in \Cref{fig:exponent-proposed}.
On the input $\ket{\texttt{x} \sep 2, \texttt{y} \sep 1}$, it produces the final state $\ket{\texttt{x} \sep 2, \texttt{y} \sep 1, \texttt{res} \sep 2, \texttt{pc} \sep 8, \texttt{br} \sep 1}$. Likewise, input $\ket{\texttt{x} \sep 2, \texttt{y} \sep 2}$ results in $\ket{\texttt{x} \sep 2, \texttt{y} \sep 2, \texttt{res} \sep 4, \texttt{pc} \sep 8, \texttt{br} \sep 1}$. At first glance, the program seems synchronized --- \texttt{pc} is always 8 and \texttt{br} is 1.

\paragraph{Problem: Tortoise and Hare}
However, the above values of \texttt{pc} are not for the same time step $t$.
The loop from lines 4 to 7 executes once when \texttt{y} is 1, but twice when \texttt{y} is 2. At $t = 10$, the first input has a \texttt{pc} of 8, having exited the loop, but the second input has a \texttt{pc} of 5, starting the second iteration.

One could continue the slower execution until it also reaches line 8, but by that time, the faster execution will have advanced further again.
In a reversible machine semantics, there can exist no concept of a barrier at which the faster execution stops and waits for the slower one. If execution momentarily halts at an instruction that decrements \texttt{br} to 0, then on the next cycle, \texttt{br} would decrement again, meaning \texttt{pc} starts moving again.
In general, if the tortoise never catches up to the hare, then there is no point in time at which execution may be safely terminated.

\paragraph{Solution: Padding}
\begin{figure}
\centering
\hspace*{-0.75em}%
\begin{minipage}[t]{.53\textwidth}
\begin{lstlisting}[language={[x86masm]Assembler},morekeywords={rjmp,rjz,rjnz,swap,rmul,radd,rjne,rjle}]
    add res $1     ; copy 1 into res
    add r1 max     ; copy max into r1
l1: rjne l3 r1 max ; if r1 != max, come from l3
l2: jz l4 r1       ; if r1 == 0, break
l5: jg l7 r1 y     ; if r1 > y, goto l7
    mul res x      ; multiply res by x
l6: jmp l8         ; break
l7: rjmp l5        ; come from l5
    nop            ; no-op padding
l8: rjle l6 r1 y   ; if r1 <= y, come from l6
    radd r1 $1     ; decrement r1
l3: jmp l1         ; goto start of loop
l4: rjmp l2        ; come from l2
\end{lstlisting}
\end{minipage}%
\begin{minipage}[t]{.47\textwidth}
\begin{lstlisting}[language={[x86masm]Assembler},morekeywords={rjmp,rjz,rjnz,swap,rmul,radd,rjne,u}]
    add r1 i     ; copy i into r1
l1: rjne l3 r1 i ; if r1 != i, come from l3
l2: jz l4 r1     ; if r1 == 0, break
    u H c        ; apply H gate to c
l5: jz l7 c      ; if c == 0, goto l7
    add x $1     ; add 1 to x
l6: jmp l8       ; break
l7: rjmp l5      ; come from l5
    radd x $1    ; subtract 1 from x
l8: rjnz l6 c    ; if c != 0, come from l6
    radd r1 $1   ; subtract 1 from r1
l3: jmp l1       ; goto start of loop
l4: rjmp l2      ; come from l2
\end{lstlisting}
\end{minipage}

\setlength{\abovecaptionskip}{3pt}
\setlength{\belowcaptionskip}{-0.75em}
\begin{minipage}[t]{.5\textwidth}
\caption{Synchronized exponentiation.} \label{fig:exponent-synchronized}
\end{minipage}%
\begin{minipage}[t]{.48\textwidth}
\caption{Program implementing a Hadamard walk.} \label{fig:hadamard-walk}
\end{minipage}
\end{figure}
In \Cref{fig:exponent-synchronized}, we present a synchronized program that avoids the problem. This program executes the loop a fixed, rather than data-dependent, number of times.
Its loop body multiplies \texttt{res} by \texttt{x} for only \texttt{y} iterations, and afterward, it executes padding \texttt{nop}s with no effect.

The new argument \texttt{max}, which we require to be classical, upper-bounds the possible values of \texttt{y}.
After \texttt{max} iterations of the loop, each branch stores the correct \texttt{res}, and the values of \texttt{pc} and \texttt{br} are equal across all branches, so the program is synchronized.
Here, padding is not the only possible approach, but it is simple to use and verify as it guarantees that the program is synchronized.

This example demonstrates how the fundamental property of synchronization restricts the space of valid programs on any quantum computer supporting control flow in superposition. In particular, loops without classical upper bounds cannot be synchronized, which is consistent with prior impossibility results in quantum Turing machines --- for more details, see \Cref{sec:related-work}.

\subsection{Branch Interference and Quantum Walk} \label{sec:control-interference-case-study}

\begin{wrapfigure}[20]{r}{.5\textwidth}
\vspace*{-2em}%
\begin{minipage}{.5\textwidth}%
\begin{align*}%
\arraycolsep=0.125em\def\arraystretch{1.2}%
\begin{array}{ r r l }%
& &\!\ket{\texttt{x} \sep 3, \texttt{c} \sep 0, \texttt{pc} \sep 5} \\
  \mapsto{} & \textstyle\frac{1}{\sqrt{2}}(\!&\!\ket{\texttt{x} \sep 3, \texttt{c} \sep 0, \texttt{pc} \sep 9} + \ket{\texttt{x} \sep 3, \texttt{c} \sep 1, \texttt{pc} \sep 6}) \\
  \mapsto{} & \textstyle\frac{1}{\sqrt{2}}(\!&\!\ket{\texttt{x} \sep 2, \texttt{c} \sep 0, \texttt{pc} \sep 5} + \ket{\texttt{x} \sep 4, \texttt{c} \sep 1, \texttt{pc} \sep 5}) \\
  \mapsto{} & \textstyle\frac{1}{2}(\!&\!\ket{\texttt{x} \sep 2, \texttt{c} \sep 0, \texttt{pc} \sep 9} + \ket{\texttt{x} \sep 2, \texttt{c} \sep 1, \texttt{pc} \sep 6} \\[-0.2em]
  & +{}\vphantom{\textstyle\frac{1}{\sqrt{2}}} &\!\ket{\texttt{x} \sep 4, \texttt{c} \sep 0, \texttt{pc} \sep 9} - \ket{\texttt{x} \sep 4, \texttt{c} \sep 1, \texttt{pc} \sep 6}) \\
  \mapsto{} & \textstyle\frac{1}{2}(\!&\!\ket{\texttt{x} \sep 1, \texttt{c} \sep 0, \texttt{pc} \sep 5} + \ket{\texttt{x} \sep 3, \texttt{c} \sep 1, \texttt{pc} \sep 5} \\[-0.2em]
  & +{}\vphantom{\textstyle\frac{1}{\sqrt{2}}} &\!\ket{\texttt{x} \sep 3, \texttt{c} \sep 0, \texttt{pc} \sep 5} - \ket{\texttt{x} \sep 5, \texttt{c} \sep 1, \texttt{pc} \sep 5}) \\
  \mapsto{} & \textstyle\frac{1}{2\sqrt{2}}(\!&\!\ket{\texttt{x} \sep 1, \texttt{c} \sep 0, \texttt{pc} \sep 9} + \ket{\texttt{x} \sep 1, \texttt{c} \sep 1, \texttt{pc} \sep 6} \\[-0.3em]
  & +{} &\!\ket{\texttt{x} \sep 3, \texttt{c} \sep 0, \texttt{pc} \sep 9} \cancel{\textcolor{mygray}{{-} \ket{\texttt{x} \sep 3, \texttt{c} \sep 1, \texttt{pc} \sep 6}}} \\[-0.1em]
  & +{} &\!\ket{\texttt{x} \sep 3, \texttt{c} \sep 0, \texttt{pc} \sep 9} \cancel{\textcolor{mygray}{{+} \ket{\texttt{x} \sep 3, \texttt{c} \sep 1, \texttt{pc} \sep 6}}} \\[-0.1em]
  & -{}\vphantom{\textstyle\frac{1}{\sqrt{2}}} &\!\ket{\texttt{x} \sep 5, \texttt{c} \sep 0, \texttt{pc} \sep 9} + \ket{\texttt{x} \sep 5, \texttt{c} \sep 1, \texttt{pc} \sep 6}) \\
  \mapsto{} & \textstyle\frac{1}{2\sqrt{2}}(\!&\!\ket{\texttt{x} \sep 0, \texttt{c} \sep 0, \texttt{pc} \sep 14} + \ket{\texttt{x} \sep 2, \texttt{c} \sep 1, \texttt{pc} \sep 14} \\[-0.3em]
  & +{} &\!\ket{\texttt{x} \sep 2, \texttt{c} \sep 0, \texttt{pc} \sep 14} + \ket{\texttt{x} \sep 2, \texttt{c} \sep 0, \texttt{pc} \sep 14} \\[-0.1em]
  & -{} &\!\ket{\texttt{x} \sep 4, \texttt{c} \sep 0, \texttt{pc} \sep 14} + \ket{\texttt{x} \sep 6, \texttt{c} \sep 1, \texttt{pc} \sep 14})
\end{array}
\end{align*}
\end{minipage}
\setlength{\abovecaptionskip}{5pt}
\caption{Partial execution trace of \Cref{fig:hadamard-walk}, showing only steps where $\texttt{br} = 1$. In each state, \texttt{i} and \texttt{r1} are uniform across the superposition, and they are not shown.} \label{fig:hadamard-walk-trace}
\end{wrapfigure}
The quantum control machine distinguishes itself from any classical computer by its ability to exhibit quantum interference across control flow paths of the computation. Such interference is essential to the advantage of quantum walk algorithms such as~\citet{ambainis2004,ambainis2010,childs2007,aharonov2001,shenvi2003}.

In \Cref{fig:hadamard-walk}, we present a program that implements a Hadamard walk~\citep{ambainis2001}, adapted from~\citet{ying2014}.
This program loops over \texttt{i} iterations, where the algorithm specifies \texttt{i} to be classical. Each round, the program executes an \texttt{H} gate over a qubit \texttt{c} and then adds or subtracts 1 from \texttt{x} based on \texttt{c}, using the quantum analogue of an \texttt{if}-statement (\Cref{ex:branching}).

\paragraph{Branch Interference}
The way in which this program demonstrates quantum interference is that measuring the final \texttt{x} value it produces yields a substantially different distribution from a classical random walk that on every round moves \texttt{x} in a uniformly random direction.

In particular, letting the initial \texttt{x} and \texttt{i} be 3, the program executes as in \Cref{fig:hadamard-walk-trace}. At the end of the program, measuring the value of \texttt{x} yields outcome 4 with probability only $1/6 \approx 17\%$, as compared to $3/8 \approx 38\%$ for the classical random walk. The reason is that quantum interference cancels two execution paths, corresponding to outcome 4, that have opposite phase.

The correctness of the program relies on the use of injective abstractions for control flow as opposed to writing down a history of the program counter as in \Cref{sec:entanglement-problem}.
For contrast, we depict in \Cref{sec:incorrect-example} the incorrect execution that would result from writing down such a history. The difference is that the analogues of the identical states that cancel in \Cref{fig:hadamard-walk-trace} are instead distinct and do not interfere to cancel. In the final measurement outcome of the incorrect execution, the outcome of 4 occurs with probability $3/8$, the same result as on a classical computer.

\subsection{Indexed Branching and Quantum Simulation}

The quantum control machine enables branching operations to be indexed by data in superposition, in a way analogous to classical array indexing or \texttt{switch}-statements (\Cref{ex:branching}).

In \Cref{fig:majorana}, we present a program that, given two quantum registers \texttt{i} and \texttt{x}, applies a Majorana fermion operator $\ket{\texttt{i}}\ket{\texttt{x}} \mapsto \ket{\texttt{i}} \texttt{Y}_\texttt{i} \cdot \texttt{Z}_{\texttt{i} - 1} \cdots \texttt{Z}_0 \ket{\texttt{x}}$ to them, which is useful to algorithms for simulation of fermionic systems~\citep{babbush2018}. In this operator, \texttt{Y} is a single-qubit Pauli-$Y$ gate as defined in~\citet{nielsen_chuang_2010}, and we assume that the \texttt{Y} and \texttt{Z} (\Cref{sec:background}) gates are supported as primitive unitary gates.

\begin{wrapfigure}[12]{r}{.5\textwidth}
\vspace*{-2.5em}%
\begin{lstlisting}[language={[x86masm]Assembler},morekeywords={rjmp,rjz,rjnz,swap,rmul,radd,u,get,rget,rjne}]
    get i r1 x    ; put bit i of x into r1
    u Y r1        ; apply Y gate to r1
    rget i r1 x   ; put r1 into bit i of x
    add r1 i      ; copy i into r1
l1: rjne l3 r1 i  ; if r1 != i, come from l3
l2: jz l4 r1      ; if r1 == 0, break
    radd r1 $1    ; subtract 1 from r1
    get r1 r2 x   ; put bit r1 of x into r2
    u Z r2        ; apply Z gate to r2
    rget r1 r2 x  ; put r2 into bit r1 of x
l3: jmp l1        ; goto start of loop
l4: rjmp l2       ; come from l2
\end{lstlisting}
\setlength{\abovecaptionskip}{3pt}
\caption{Program for a Majorana fermion operator.} \label{fig:majorana}
\end{wrapfigure}

The program operates as follows.
First, lines 1 to 3 apply the \texttt{Y} gate on the \texttt{i}th qubit of the \texttt{x} register. The following loop then performs the \texttt{Z} gate on each of the qubits $\texttt{i}-1$ through 0 of the \texttt{x} register. For clarity, the loop has not yet been subject to padding as in \Cref{sec:iteration-case-study}, which must still be done if \texttt{i} is in superposition.

Though this program has not been optimized for practical concerns such as qubit and gate usage, it exemplifies a new programming model in which one can work with quantum data via abstractions similar to classical arrays.

\section{Implications and Directions Forward} \label{sec:implications}

In this section, we discuss implications of this work to research in quantum programming languages, computer architecture, and theory of computation.
For a detailed discussion of the practical costs to realizing control flow in superposition in terms of hardware support and program verification, see \Cref{sec:costs}.
For other related work studying designs for quantum $\lambda$-calculi, reversible computation, oblivious computation, and unbounded-time quantum computation, see \Cref{sec:related-work}.

\subsection{Quantum Programming Languages}

Abstractions for control flow in superposition such as quantum \texttt{if}-statements and \texttt{for}-loops have become a value proposition in emerging quantum programming languages~\citep{silq,tower,voichick2023,pal2022}. The no-embedding theorem provides a unifying explanation for why these abstractions, and others such as recursion and continuations, cannot be adapted to superposition by directly lifting the classical conditional jump.

As an example, language designers have repeatedly and independently confronted the fact that a quantum \texttt{if}-statement is not realizable in general if its branches can be arbitrary statements. Proposed solutions have included the dynamic enforcement of an orthogonality judgment~\citep{qml} and the static enforcement of conditions such as preventing the condition of the \texttt{if} from being modified under its branches~\citep{silq,tower}.
This work presents a correctness condition for control flow in superposition that unifies and generalizes prior approaches, which is that the semantics of each abstraction must be injective and the program must be synchronized. In principle, this condition can be enforced at the level of the \texttt{if}-statement or at the more primitive level of assembly, as in the quantum control machine.

An advantage of sound primitives at the assembly level is that they in turn empower generalized reasoning about the space of realizable abstractions and can act as an intermediate compilation target for emerging abstractions. For instance, though researchers have proposed quantum analogues of recursion and closures~\citep{diazcaro2019,ying2012,ying2014}, to date we are not aware of their realization via a compiler or equivalent. We hope that the quantum control machine may act as a compilation target for such proposals and create new opportunities in language design.

\subsection{Quantum Computer Architecture}

Researchers have proposed stored-program quantum architectures, commonly referred to as quantum von Neumann architectures, with suggested benefits for the efficiency~\citep{kjaergaard2020}, realizability~\citep{meier2024}, and security~\citep{wang2022} of the resulting system.

For example,~\citet{kjaergaard2020} experimentally realize a device that uses one set of qubits to parameterize a rotation gate over another set of qubits. That work describes the ``use of quantum instructions to implement a quantum program'' and ``instructions derived from the present quantum state of the processor'' as potentially advantageous in quantum algorithms for semi-definite programming, simulation, and principal component analysis~\citep{kjaergaard2020}.

However, to our best knowledge, these prior designs have not acknowledged the challenges that will be ultimately encountered when making instructions such as conditional jump operate in superposition. For instance, the no-embedding theorem implies that the machine proposed by \citet{lagana2009}, which attempts to provide conditional jump via a history of program counters, does not correctly execute quantum algorithms.
While the designs of~\citet{wang2022,meier2024} lack an operational specification for the control unit as an instruction set, these designs would face the same fundamental limitations on expressible control flow when fully formalized.

\subsection{Theory of Quantum Computation}

A common, and true, maxim in quantum computation is that any classical computation is also realizable on a quantum computer~\citep[Section 1.4.1]{nielsen_chuang_2010}.
By contrast, in this work, we show that a stronger assumption -- any classical programming abstraction is also correctly realizable on a quantum computer -- is false, as seen in the conventional conditional jump.

An implication is that designers of algorithms would benefit from explicitly specifying control flow as part of the state of a quantum computation rather than leaving it as an implementation detail. The realization of a control flow abstraction requires careful reasoning from the language and potentially the programmer to produce a correct output and preserve quantum advantage.

The scope of this implication is over algorithms that transform quantum data and then leverage interference on the output data, which include~\citet{shor1997} and the other examples in \Cref{sec:case-study}.
We note that it may be possible in limited cases for the design of algorithms to preemptively avoid this concern --- for example, the classical oracle component of~\citet{grover1996} produces an output that is promptly uncomputed, and the algorithm instead leverages interference on the input data.

\section{Conclusion}

Researchers have long studied designs for quantum computers to learn how to realize the design in hardware or analyze its theoretical power. This work advocates for a new dimension of study --- how to correctly and intuitively program the computer to implement quantum algorithms.

Studying a quantum computer through the lens of a programmer reveals the danger that trying to implement a quantum algorithm using classical control flow abstractions such as conditional jump can cause the algorithm to produce incorrect results. Put plainly, programming a quantum computer in the same way as a classical one can in fact turn the quantum computer into a classical computer. If so, the computer's quantum advantage and the return on its investment are lost.

Despite these challenges, we believe that control flow in superposition will remain an indispensible abstraction for expressing quantum algorithms. This work makes it possible for the first time to correctly program a quantum computer using the abstraction of a program counter, bringing the vision of making quantum programs as easy to write as classical programs closer to reach.

\section*{Data Availability Statement}
The software that supports~\Cref{sec:case-study} is available on Zenodo~\citep{artifact}.

\section*{Acknowledgements}
We thank Ellie Cheng, Tian Jin, Jesse Michel, Patrick Rall, and Logan Weber for helpful feedback on this work, and also Scott Aaronson, Soonwon Choi, Isaac Chuang, Aram Harrow, Stacey Jeffrey, Bobak Toussi Kiani, and Yuval Sanders for providing references to related work. This work was supported in part by the National Science Foundation (CCF-1751011) and the Sloan Foundation.

\bibliography{biblio.bib}
\vfill
\clearpage

\appendix

\section{Semantics of Arithmetic and Overflow} \label{sec:overflow}

In this section, we discuss the semantics of the example program of \Cref{sec:examples} in the event of integer overflow. We also present in detail the semantics of the arithmetic operations defined in \Cref{sec:machine}.

\subsection{Example Program of Section~\ref{sec:examples}}

When machine integers are used in practice to implement the program $P$ defined in \Cref{eqn:program}, certain invalid but physically possible inputs may cause the addition operation to overflow. Even when this problem arises, it is still possible to use a unitary operator to implement the restricted definition of \Cref{eqn:program} on valid inputs that do not overflow. When given invalid inputs, this unitary operator produces a state that is physically realizable but arbitrary in principle.

For example, suppose that the machine word size $k=1$, that is, all integers are 1-bit. Then, the input $\texttt{x} = 0, \texttt{y} = 1$ is an invalid input to $P$ as it causes the addition $y + 1$ to overflow.
In this case, it is still possible to realize the valid cases of $P$ using a unitary operator, as follows.

Given the invalid input $\ket{\texttt{x} \sep 0, \texttt{y} \sep 1}$, the output is arbitrary in principle and could be defined to be $\ket{\texttt{x} \sep 0, \texttt{y} \sep 0}$. We may generalize and define the entire semantics of $P$ as the following:
\begin{align*}
\ket{\texttt{x} \sep 0, \texttt{y} \sep 0} \overset{P}{\mapsto} \ket{\texttt{x} \sep 0, \texttt{y} \sep 1} \qquad &
\ket{\texttt{x} \sep 0, \texttt{y} \sep 1} \overset{P}{\mapsto} \ket{\texttt{x} \sep 0, \texttt{y} \sep 0} \\
\ket{\texttt{x} \sep 1, \texttt{y} \sep 0} \overset{P}{\mapsto} \ket{\texttt{x} \sep 1, \texttt{y} \sep 0} \qquad &
\ket{\texttt{x} \sep 1, \texttt{y} \sep 1} \overset{P}{\mapsto} \ket{\texttt{x} \sep 1, \texttt{y} \sep 1}
\end{align*}

This function is injective, meaning that the semantics of $P$ as specified above is realizable as a unitary operator. In the 1-qubit case, only the first case of this truth table defines the valid semantics of $P$ where the corresponding addition operation does not overflow. The second, third, and fourth cases are arbitrary, not arising during the execution of a correctly specified program on valid inputs, and serving only to fully define the operator as unitary.

\subsection{Arithmetic Operations of Section~\ref{sec:machine}}

In \Cref{tbl:isa}, we present the arithmetic instructions of the quantum control machine.
Given two distinct registers $\texttt{r}a$ and $\texttt{r}b$, the $\texttt{add}$ instruction adds $\texttt{r}b$ into $\texttt{r}a$, and $\texttt{mul}$ multiplies $\texttt{r}b$ into $\texttt{r}a$. All arithmetic is unsigned, modulo $2^k$, and formally defined only when overflow does not occur. Multiplication is only defined when $\texttt{r}b$ is nonzero, so that it may be realized via reversible logic for arithmetic.
Given two arguments that are the same register, the $\texttt{add}$ or $\texttt{mul}$ instruction is specialized to double or square the value of that register respectively when overflow does not occur.

The definitions of the reverses of the arithmetic instructions are analogous.
Given two distinct registers $\texttt{r}a$ and $\texttt{r}b$, the instruction $\texttt{radd}\ \texttt{r}a\ \texttt{r}b$ subtracts $\texttt{r}b$ from $\texttt{r}a$ if the latter is greater than or equal to the former, and $\texttt{rmul}\ \texttt{r}a\ \texttt{r}b$ divides $\texttt{r}b$ from $\texttt{r}a$ if the latter is a multiple of the former. Given the register $\texttt{r}a$, the instruction $\texttt{radd}\ \texttt{r}a\ \texttt{r}a$ divides $\texttt{r}a$ by two if it is even, and $\texttt{rmul}\ \texttt{r}a\ \texttt{r}a$ takes the square root of $\texttt{r}a$ if it is a square number. On other inputs, which can never result from \texttt{add} or \texttt{mul}, the semantics of \texttt{radd} and \texttt{rmul} respectively are implementation-defined.

Though formally defined above only in the absence of overflow, addition and multiplication may be implementation-defined to account for the possibility of overflow by the same principle as before, which is that the output program state is any that suffices to make the operator unitary overall.
To illustrate for the addition operation, one possible approach would be to add an integer register $z$ whose value becomes nonzero only in the event of overflow, effectively performing standard reversible arithmetic at a higher bit-width over the combined register $\ket{z,x}$:
\begin{align*}
\ket{x}\ket{y}\ket{z} \mapsto
\ket{(x+y)\;\textrm{mod}\; 2^k}\ket{y}\ket{(z + \lfloor(x + y)/ 2^k\rfloor) \;\textrm{mod}\; 2^k}
\end{align*}
where $k$ is the system word size. This operator is unitary, and during the execution of a program without overflow, the register $z$ remains zero always, meaning that in practice it could be reused for other computation. A similar approach could be taken for multiplication, and this approach is one of many possible implementations that could be ultimately used by the machine implementer.

\section{Example Program Implementation in Q\# and Silq} \label{sec:qsharp-silq}

In \Cref{fig:qsharp}, we present an implementation of the example of \Cref{eqn:program} in Q\#~\citep{qsharp}. The program uses the quantum circuit abstractions of qubits and bit-controlled logic gates, including the manipulation of a temporary qubit \texttt{z} that controls operations on \texttt{y} and \texttt{x}.
\begin{figure}
\begin{lstlisting}[morekeywords={operation,Qubit,Unit,let,use,within,apply,Controlled,Adjoint}]
operation f(x: Qubit[], y: Qubit[]) : Unit {
  use z = Qubit() {
    within {
      within {
        ApplyToEachA(X, x);
      } apply {
        Controlled X(x, z);
      }
    } apply {
      Controlled IncrementByInteger([z], (1, LittleEndian(y)));
      within {
        X(z);
      } apply {
        Controlled IncrementByInteger([z], (1, LittleEndian(x)));
} } } }
\end{lstlisting}
\caption{Implementation of \Cref{eqn:program} in Q\#.} \label{fig:qsharp}
\end{figure}

In \Cref{fig:program-simpler-silq}, we present an implementation of \Cref{eqn:program} in Silq~\citep{silq}. The condition relating \texttt{x} and \texttt{x'} that must be used inside the \texttt{forget} statement for \Cref{fig:program-simpler-silq} to be correct is analogous to the condition on \texttt{x} of the \texttt{\textbf{rjz}} instruction on line 6 in \Cref{fig:qcm-assembly}.

\begin{figure}
\begin{lstlisting}[morekeywords={def,uint,mfree,if,else,forget,then,return,int}]
def f(x: uint[8], y: uint[8]) mfree: uint[8]^2 {
  if x=0 {
    y += 1;
    x' := 0:int[8];
  } else {
    x' := x+1;
  }
  forget(x=if x'=0 then 0:int[8] else x'-1);
  return (x', y);
}
\end{lstlisting}
\caption{Implementation of \Cref{eqn:program} in Silq.} \label{fig:program-simpler-silq}
\end{figure}

\section{Quantum Lambda-Calculus with Quantum Control Is Not Physically Realizable} \label{sec:app-lambda}

In this section, we show that the no-embedding theorem (\Cref{thm:no-embedding}) implies that the quantum $\lambda$-calculus with quantum control is not physically realizable.
This calculus, named $\lambda_i$ by~\citet{vantonder2004}, is a quantum extension of the classical $\lambda$-calculus. The hallmark of its design is that it permits program terms to exist in superposition, unlike other designs for quantum $\lambda$-calculi~\citep{selinger2004} that require program terms and evaluation to be classical.

\paragraph{Syntax}
The calculus extends classical $\lambda$-calculus with unitary operators over $\lambda$-terms. Given primitives $0$ and $1$, the language incorporates unitary gates such as Hadamard (\texttt{H}):
\[ \arraycolsep=0.125em\def\arraystretch{1.2}%
\begin{array}{r l@{\hspace{1.5em}}l}
\text{Term}\ t \Coloneqq & x & \text{(variable)} \\
 \mid & \lambda x.t & \text{(abstraction)} \\
 \mid & t_1\ t_2 & \text{(application)} \\
 \mid & c & \text{(constant)} \\
\text{Constant}\ c \Coloneqq & 0 \mid 1 & \text{(bit)} \\
 \mid & \texttt{H} & \text{(Hadamard gate)}
\end{array}
\]

In principle, the calculus may also feature arbitrary other quantum gates, though Hadamard suffices for our argument that the calculus is not realizable.
Also, an equivalent to having primitives $0$ and $1$ would be to replace them with their Church encodings $\lambda x. \lambda y. x$ and $\lambda x. \lambda y. y$.

\paragraph{Semantics}
In the operational semantics of this calculus, the states of the machine are quantum states over terms $t$, such as $\ket{0}$, $\ket{\texttt{H}\ 0}$, and $\frac{1}{\sqrt{2}}(\ket{0} + \ket{1})$.
The work of \citet{vantonder2004} specifies the desired operational semantics of the Hadamard gate to be as follows:
\begin{align*}
\begin{aligned}
\ket{\texttt{H}\ 0} &\mapsto_\beta \textstyle\frac{1}{\sqrt{2}}\ket{0} + \frac{1}{\sqrt{2}}\ket{1} \\
\ket{\texttt{H}\ 1} &\mapsto_\beta \textstyle\frac{1}{\sqrt{2}}\ket{0} - \frac{1}{\sqrt{2}}\ket{1}
\end{aligned}
\end{align*}

That is, the program term $\ket{\texttt{H}\ 0}$ $\beta$-reduces to a superposition of the terms $\ket{0}$ and $\ket{1}$, and the term $\ket{\texttt{H}\ 1}$ reduces to a superposition with opposite phase. This behavior accords with the standard definition of the Hadamard unitary gate (\Cref{sec:background}).

The original work of \citet{vantonder2004} only incompletely specifies how this semantics should evaluate the term $\ket{\texttt{H}\ (\texttt{H}\ 0)}$. However, the standard definition of the Hadamard gate postulates the desired semantics to be the following:
\begin{align}
\begin{split} \label{eqn:double-hadamard}
  & \ket{\texttt{H}\ (\texttt{H}\ 0)} \\
  \mapsto_\beta{} & \textstyle\frac{1}{\sqrt{2}}(\ket{\texttt{H}\ 0} + \ket{\texttt{H}\ 1}) \\
  \mapsto_\beta{} & \textstyle\frac{1}{\sqrt{2}}(\frac{1}{\sqrt{2}}\ket{0} + \frac{1}{\sqrt{2}}\ket{1} + \ket{\texttt{H}\ 1}) \\
  \mapsto_\beta{} & \textstyle\frac{1}{\sqrt{2}}(\frac{1}{\sqrt{2}}\ket{0} \cancel{\textcolor{mygray}{{+} \frac{1}{\sqrt{2}}\ket{1}}} + \frac{1}{\sqrt{2}}\ket{0} \cancel{\textcolor{mygray}{{-} \frac{1}{\sqrt{2}}\ket{1}}}) = \ket{0}
\end{split}
\end{align}

That is, the program term $\ket{\texttt{H}\ (\texttt{H}\ 0)}$ should step to a superposition of the terms $\ket{\texttt{H}\ 0}$ and $\ket{\texttt{H}\ 1}$. Taking more steps to evaluate the inner Hadamard terms should produce a state that thanks to interference (\Cref{sec:background}) is equal to $\ket{0}$.

\paragraph{Linearity}
The key property that enables the \texttt{H} operator in this $\lambda$-calculus to behave as the standard Hadamard gate is that the term transition function $\mapsto$ of the operational semantics acts \emph{linearly} (\Cref{sec:background}) across quantum states:
\begin{align*}
  \textstyle\sum_i \gamma_i \ket{t_i} \mapsto \sum_i \gamma_i \ket{t'_i} \iff \forall i. \ket{t_i} \mapsto \ket{t'_i}
\end{align*}

Linearity postulates that the evaluation of a superposition of program terms produces the corresponding superposition of evaluated program terms. If this property were not true, the final step in the evaluation of $\ket{\texttt{H}\ (\texttt{H}\ 0)}$ would not produce the correct interference that results in $\ket{0}$.

\paragraph{Physical Realizability}
The problem is that the calculus has a transition function that is not injective and hence not unitary. There exist multiple distinct states that transition to the same state:
\begin{align*}
  \ket{0} \neq \ket{\psi} = \textstyle\frac{1}{\sqrt{2}}(\frac{1}{\sqrt{2}}\ket{0} + \frac{1}{\sqrt{2}}\ket{1} + \ket{\texttt{H}\ 1}) \mapsto_\beta& \ket{0} \\
  \ket{0} \mapsto_\beta& \ket{0}
\end{align*}

The first line above is the third reduction step in \Cref{eqn:double-hadamard}. The second line embodies the fact that in $\lambda$-calculus, terms that are not applications are values and step to themselves. More broadly, the non-injectivity of $\beta$-reduction is fundamental to $\lambda$-calculus, as demonstrated by the fact that the distinct terms $\lambda x. x$ and $(\lambda x. x)(\lambda x. x)$ both reduce to the same term $\lambda x. x$.

That the transition function is not injective means that it is not \emph{norm-preserving} (\Cref{sec:background}). There exist physically realizable machine states that transition to unrealizable states:
\begin{align*}
  \textstyle\frac{1}{\sqrt{2}}(\ket{0} - \ket{\psi}) \mapsto_\beta \textstyle\frac{1}{\sqrt{2}}(\ket{0} - \ket{0}) = 0
\end{align*}
in which $\ket{\psi}$ is as above and the output state is the zero vector --- a physically impossible outcome.

\paragraph{Disruptive Entanglement}
The work by \citet{vantonder2004} attempts to resolve the non-injectivity problem via Landauer embedding --- maintaining a history $\ell$ of intermediate terms during evaluation. Just as in \Cref{sec:examples}, the problem of disruptive entanglement arises:
\begin{align*}
  & \ket{\ell, \texttt{H}\ (\texttt{H}\ 0)} \\
  \mapsto_\beta{}& \textstyle\frac{1}{\sqrt{2}}(\ket{\ell, \texttt{H}\ (\texttt{H}\ 0), \texttt{H}\ 0} + \ket{\ell, \texttt{H}\ (\texttt{H}\ 0), \texttt{H}\ 1}) \\
  \mapsto_\beta{}& \textstyle\frac{1}{\sqrt{2}}(\frac{1}{\sqrt{2}}\ket{\ell, \texttt{H}\ (\texttt{H}\ 0), \texttt{H}\ 0, 0} + \frac{1}{\sqrt{2}}\ket{\ell, \texttt{H}\ (\texttt{H}\ 0), \texttt{H}\ 0, 1} + \ket{\ell, \texttt{H}\ (\texttt{H}\ 0), \texttt{H}\ 1}) \\
  \mapsto_\beta{}& \textstyle\frac{1}{\sqrt{2}}(\frac{1}{\sqrt{2}}\ket{\ell, \texttt{H}\ (\texttt{H}\ 0), \texttt{H}\ 0, 0} + \frac{1}{\sqrt{2}}\ket{\ell, \texttt{H}\ (\texttt{H}\ 0), \texttt{H}\ 0, 1} \\
  & \hspace{0.56em}+ \textstyle\frac{1}{\sqrt{2}}\ket{\ell, \texttt{H}\ (\texttt{H}\ 0), \texttt{H}\ 1, 0} - \frac{1}{\sqrt{2}}\ket{\ell, \texttt{H}\ (\texttt{H}\ 0), \texttt{H}\ 1, 1}) \\
  \neq{}& \ket{\psi} \otimes \ket{0}\,\text{for any}\,\ket{\psi}
\end{align*}

Here, entanglement with the history means that no interference occurs between the two terms $\smash{\frac{1}{\sqrt{2}}}\ket{\ell, \texttt{H}\ (\texttt{H}\ 0), \texttt{H}\ 0, 1}$ and $-\smash{\frac{1}{\sqrt{2}}}\ket{\ell, \texttt{H}\ (\texttt{H}\ 0), \texttt{H}\ 1, 1}$. The value produced by evaluation is not a separable qubit $\ket{0}$ but instead entangled with the history. Measuring this value yields 0 with probability only $\frac{1}{2}$, which is inconsistent with the semantics of Hadamard in \Cref{eqn:double-hadamard}.

The work of \citet{vantonder2004} presents attempts to overcome the problem of entanglement by reducing the information that is stored in the history, replacing terms with placeholders. None of these attempts succeed, except a final attempt that fundamentally modifies the language to force all $\lambda$-terms to be classical and prevent terms from existing in superposition.

\paragraph{Implications}
The work of \citet{vantonder2004} concludes, without formal proof, that the above quantum $\lambda$-calculus with program terms in superposition is not physically realizable.

\Cref{thm:no-embedding} formally proves this claim --- because the $\beta$-reduction transition function of $\lambda$-calculus is non-injective, no embedding, including the use of histories as in Landauer embedding, can realize this function on a quantum computer in a way that preserves the correct interference of the Hadamard gate. In turn, the non-injectivity of  $\beta$-reduction is inherent to the nature of the $\lambda$-calculus, as it ensures that both expressions $\lambda x. x$ and $(\lambda x. x)(\lambda x. x)$ reduce to the value $\lambda x. x$.

On the other hand, we show that the more drastic step of forcing all control flow to be classical is not necessary for a quantum computer.
Indeed, the quantum control machine supports control flow in superposition via instructions that have inherently injective semantics. It remains open whether a programming model resembling the $\lambda$-calculus may be used to program this machine.

\section{Incorrect Branch Interference Example} \label{sec:incorrect-example}

\begin{figure}
\[ \arraycolsep=0.125em\def\arraystretch{1.2}%
\begin{array}{ r r l }
& &\!\ket{\texttt{x} \sep 3, \texttt{c} \sep 0, \texttt{pc}_0 \sep 5} \\
  \mapsto{} & \textstyle\frac{1}{\sqrt{2}}(\!&\!\ket{\texttt{x} \sep 3, \texttt{c} \sep 0, \texttt{pc}_0 \sep 9, \texttt{pc}_1 \sep 5} + \ket{\texttt{x} \sep 3, \texttt{c} \sep 1, \texttt{pc}_0 \sep 6, \texttt{pc}_1 \sep 5}) \\
  \mapsto{} & \textstyle\frac{1}{\sqrt{2}}(\!&\!\ket{\texttt{x} \sep 2, \texttt{c} \sep 0, \texttt{pc}_0 \sep 5, \texttt{pc}_1 \sep 9, \ldots} + \ket{\texttt{x} \sep 4, \texttt{c} \sep 1, \texttt{pc}_0 \sep 5, \texttt{pc}_1 \sep 6, \ldots}) \\
  \mapsto{} & \textstyle\frac{1}{2}(\!&\!\ket{\texttt{x} \sep 2, \texttt{c} \sep 0, \texttt{pc}_0 \sep 9, \texttt{pc}_1 \sep 5, \ldots} + \ket{\texttt{x} \sep 2, \texttt{c} \sep 1, \texttt{pc}_0 \sep 6, \texttt{pc}_1 \sep 5, \ldots} \\[-0.2em]
  & +{}\vphantom{\textstyle\frac{1}{\sqrt{2}}} &\!\ket{\texttt{x} \sep 4, \texttt{c} \sep 0, \texttt{pc}_0 \sep 9, \texttt{pc}_1 \sep 5, \ldots} - \ket{\texttt{x} \sep 4, \texttt{c} \sep 1, \texttt{pc}_0 \sep 6, \texttt{pc}_1 \sep 5, \ldots}) \\
  \mapsto{} & \textstyle\frac{1}{2}(\!&\!\ket{\texttt{x} \sep 1, \texttt{c} \sep 0, \texttt{pc}_0 \sep 5, \texttt{pc}_1 \sep 9, \ldots} + \ket{\texttt{x} \sep 3, \texttt{c} \sep 1, \texttt{pc}_0 \sep 5, \texttt{pc}_1 \sep 6, \ldots} \\[-0.2em]
  & +{}\vphantom{\textstyle\frac{1}{\sqrt{2}}} &\!\ket{\texttt{x} \sep 3, \texttt{c} \sep 0, \texttt{pc}_0 \sep 5, \texttt{pc}_1 \sep 9, \ldots} - \ket{\texttt{x} \sep 5, \texttt{c} \sep 1, \texttt{pc}_0 \sep 5, \texttt{pc}_1 \sep 6, \ldots}) \\
  \mapsto{} & \textstyle\frac{1}{2\sqrt{2}}(\!&\!\ket{\texttt{x} \sep 1, \texttt{c} \sep 0, \texttt{pc}_0 \sep 9, \texttt{pc}_1 \sep 5, \texttt{pc}_2 \sep 9, \ldots} + \ket{\texttt{x} \sep 1, \texttt{c} \sep 1, \texttt{pc}_0 \sep 6, \texttt{pc}_1 \sep 5, \texttt{pc}_2 \sep 9, \ldots} \\[-0.3em]
  & +{} &\!\ket{\texttt{x} \sep 3, \texttt{c} \sep 0, \texttt{pc}_0 \sep 9, \texttt{pc}_1 \sep 5, \texttt{pc}_2 \sep 6, \ldots} \textcolor{red}{{-} \ket{\texttt{x} \sep 3, \texttt{c} \sep 1, \texttt{pc}_0 \sep 6, \texttt{pc}_1 \sep 5, \texttt{pc}_2 \sep 6, \ldots}} \\[-0.1em]
  & +{} &\!\ket{\texttt{x} \sep 3, \texttt{c} \sep 0, \texttt{pc}_0 \sep 9, \texttt{pc}_1 \sep 5, \texttt{pc}_2 \sep 9, \ldots} \textcolor{red}{{+} \ket{\texttt{x} \sep 3, \texttt{c} \sep 1, \texttt{pc}_0 \sep 6, \texttt{pc}_1 \sep 5, \texttt{pc}_2 \sep 9, \ldots}} \\[-0.1em]
  & -{}\vphantom{\textstyle\frac{1}{\sqrt{2}}} &\!\ket{\texttt{x} \sep 5, \texttt{c} \sep 0, \texttt{pc}_0 \sep 9, \texttt{pc}_1 \sep 5, \texttt{pc}_2 \sep 6, \ldots} + \ket{\texttt{x} \sep 5, \texttt{c} \sep 1, \texttt{pc}_0 \sep 6, \texttt{pc}_1 \sep 5, \texttt{pc}_2 \sep 6, \ldots}) \\
  \mapsto{} & \textstyle\frac{1}{2\sqrt{2}}(\!&\!\ket{\texttt{x} \sep 0, \texttt{c} \sep 0, \texttt{pc}_0 \sep 14, \texttt{pc}_1 \sep 9, \texttt{pc}_2 \sep 5, \texttt{pc}_3 \sep 9, \ldots} + \ket{\texttt{x} \sep 2, \texttt{c} \sep 1, \texttt{pc}_0 \sep 14, \texttt{pc}_1 \sep 6, \texttt{pc}_2 \sep 5, \texttt{pc}_3 \sep 9, \ldots} \\[-0.3em]
  & +{} &\!\ket{\texttt{x} \sep 2, \texttt{c} \sep 0, \texttt{pc}_0 \sep 14, \texttt{pc}_1 \sep 9, \texttt{pc}_2 \sep 5, \texttt{pc}_3 \sep 6, \ldots} \textcolor{red}{{-} \ket{\texttt{x} \sep 4, \texttt{c} \sep 0, \texttt{pc}_0 \sep 14, \texttt{pc}_1 \sep 6, \texttt{pc}_2 \sep 5, \texttt{pc}_3 \sep 6, \ldots}} \\[-0.1em]
  & +{} &\!\ket{\texttt{x} \sep 2, \texttt{c} \sep 0, \texttt{pc}_0 \sep 14, \texttt{pc}_1 \sep 9, \texttt{pc}_2 \sep 5, \texttt{pc}_3 \sep 9, \ldots} \textcolor{red}{{+} \ket{\texttt{x} \sep 4, \texttt{c} \sep 0, \texttt{pc}_0 \sep 14, \texttt{pc}_1 \sep 6, \texttt{pc}_2 \sep 5, \texttt{pc}_3 \sep 9, \ldots}} \\[-0.1em]
  & -{} &\!\ket{\texttt{x} \sep 4, \texttt{c} \sep 0, \texttt{pc}_0 \sep 14, \texttt{pc}_1 \sep 9, \texttt{pc}_2 \sep 5, \texttt{pc}_3 \sep 6, \ldots} + \ket{\texttt{x} \sep 6, \texttt{c} \sep 1, \texttt{pc}_0 \sep 14, \texttt{pc}_1 \sep 6, \texttt{pc}_2 \sep 5, \texttt{pc}_3 \sep 6, \ldots})
\end{array} \]%
\caption{Incorrect analogue to \Cref{fig:hadamard-walk-trace}, in which a history is kept, and interference does not occur.} \label{fig:hadamard-walk-trace-bad}
\end{figure}

In \Cref{fig:hadamard-walk-trace-bad}, we depict the alternative incorrect execution of \Cref{fig:hadamard-walk} that would result from writing down a history of program counters. Compared to \Cref{fig:hadamard-walk-trace}, the critical difference is that the states highlighted in \textcolor{red}{red} are physically distinct and thus do not interfere to cancel.

\section{Costs of Realization} \label{sec:costs}

In this section, we discuss practical costs to realizing control flow in superposition in terms of hardware and program verification. These costs are inevitable to the concept of a program counter in superposition, whether by the quantum control machine or any other design. By clarifying them here, we aim to help designers of quantum hardware and programming languages understand the feasibility of this programming abstraction in the trajectory of quantum computation.

\paragraph{Hardware Costs}
A practical limitation to realizing a program counter in superposition in hardware is that the control unit introduces substantial qubit and connectivity requirements.

The control unit requires logic to realize instruction fetching, decoding, and execution, which is expensive to implement and error-correct on near-term hardware. These costs are the same as in prior stored-program quantum architectures~\citep{kjaergaard2020,wang2022,meier2024}.
The control unit also conceptually requires connectivity with all data qubits, which is difficult to realize in the near term. These costs are the same as in quantum random-access memory~\citep{giovannetti2008,paler2020,arunachalam2015,matteo2020}.

\paragraph{Program Verification Costs}
It would be ideal to create a sound and complete procedure to verify that any quantum program is synchronized. In the theoretical worst case, this problem has exponential complexity, as is the case for verifying typical nontrivial semantic properties of classical programs with bounded registers.
Nevertheless, for practical instances of programs for the quantum control machine, it is unnecessary to determine the results of an exponential number of executions.

First, verifying a sequence of $n$ binary conditional branches requires checking only that each of the $n$ pairs of syntactic branches align, rather than the $2^n$ possible program paths.
Second, we present in \Cref{sec:iteration-case-study} a reasoning strategy for loops in which the execution time depends on only a classical upper bound, making the program synchronized by construction. Constant-time compilation strategies in secure programming~\citep{barthe2020} similarly apply.

In a more general setting, we expect that given sufficient analysis of program structure, it will be possible to partially decide this condition through techniques developed in classical and quantum program analysis. Five promising approaches are:

\begin{enumerate}
\item Formally verify, only once, the compilation of high-level constructs such as branches and loops, similarly to verified optimizers for quantum circuits~\citep{hietala2021};
\item Generalize the technique of abstracting over quantum variables as in the loop case above, similarly to quantum abstract interpretation~\citep{yu2021};
\item Use the interpretation of synchronization as a $k$-safety property to leverage relational and Cartesian Hoare logics that provide automated verification tools~\citep{sousa2016}, as could be done via quantum relational Hoare logic~\citep{unruh2019};
\item Use the interpretation of synchronization as a bound on the worst-case execution time of a program to leverage analyses of execution time via SMT solvers~\citep{seshia2011};
\item Perform runtime checks during execution on quantum hardware using a small number of measurements~\citep{li2020,twist}.
\end{enumerate}

\section{Other Related Work} \label{sec:related-work}

In this section, we discuss related work that was not addressed in \Cref{sec:implications}, on the topics of quantum $\lambda$-calculi, reversible and oblivious computation, and unbounded-time quantum computation.

\paragraph{Quantum $\lambda$-Calculi}
Control flow in superposition hints at the possibility of a purely functional paradigm of quantum programming.
Though several $\lambda$-calculi for quantum computation have been proposed, their operational semantics are not inherently injective, and they do not permit $\lambda$-terms to exist in superposition.
For example, the quantum $\lambda$-calculus with classical control~\citep{selinger2004} performs $\beta$-reduction classically and does not permit terms to exist in superposition.
Another example is the calculus of \citet{sabry2018}, whose operational semantics of $\lambda$-terms is not injective as both $\lambda x.x$ and $(\lambda x.x)(\lambda x.x)$ reduce to $\lambda x.x$. A possible way toward developing an inherently injective calculus may be to use a type system to reject terms that would reduce in a non-injective manner, as has been pursued by~\citet{arrighi2017,diazcaro2019}.

To our knowledge, the understanding of higher-order quantum programming by categorical approaches~\citep{pagani2014,malherbe2013,rennela2017,hines2011} remains that ``the general connection between physics and higher-order quantum functions is yet unclear''~\citep{renella2018}. Recent developments~\citep{hasuo2017,clairambault19,clairambault20,kornell2021} may be able to shed further light. A related topic is quantum computation with indefinite causal order such as the quantum switch~\citep{chiribella2013}, for which the quantum control machine could serve as a programmable instruction set.

\paragraph{Reversible Computation}
In classical reversible computation, the Landauer embedding~\citep{landauer1961} and uncomputation~\citep{bennett1973} have long been used to simulate arbitrary control flow constructs, including rewriting systems such as combinatory logic~\citep{dipierro2006}.

By contrast, the no-embedding theorem implies that they fail to correctly do so in quantum computation. The reason is entanglement, an inherently quantum phenomenon that does not arise in reversible computation.
The theorem precludes any system of uncomputation that is program-agnostic and always correctly erases the history in the Landauer embedding. It does not preclude program-specialized or sometimes-incorrect systems, and indeed reverse jumps are instances of manual uncomputation~\citep{silq,tower} in quantum programming.

The branch control register was introduced by seminal work on the Pendulum reversible architecture~\citep{axelsen2007,vieri1998,frank1999,axelsen2011}. In reversible computation, the cost of using a history is extra memory or energy consumption~\citep{thomsen2012}. By contrast, in quantum computation, entanglement leads the output to be irrecoverably incorrect.

\paragraph{Oblivious Computation}
Synchronization is related to the property of \emph{obliviousness}~\citep{pippenger1979,goldreich1996} as studied in security and privacy. Obliviousness states that the evolution of the program counter may not depend on the input data, only on its size, and is used in the proof~\citep[Theorem 4.3]{bv1997} that any quantum Turing machine may be transformed into an equivalent one with a classically fixed termination time.

Compared to synchronization, obliviousness is unnecessarily strong --- it constrains all intermediate states of the computation, meaning that the program in \Cref{sec:examples} could not manipulate the program counter based on data if it were required to be oblivious. Synchronization thus lies between the properties of obliviousness and \emph{history independence}~\citep{teague2001,tower}, which fixes the final states, but not running times, of a computation.

\paragraph{Unbounded-Time Quantum Computation}
A restriction of the quantum control machine is that the running time of a computation must be a classically determined value. This restriction is fundamentally implied by synchronization, which states that there must exist a point in time at which all branches of the machine state superposition possess the same control state.

Evidence suggests that such a restriction is inevitable to quantum computation, and that it is not possible to physically realize a quantum loop that does not depend on any classical bound:

\begin{itemize}
\item \citet{myers1997,kieu1998,linden1998} show that in a quantum Turing machine, when different branches of the state superposition of a computation take different amounts of time to terminate, no general scheme is known for determining whether all branches have halted that does not collapse the superposition.
\item \citet{ying2010} indicate that ``If a quantum loop is allowed to be embedded into another quantum loop, then \ldots{} the body of the latter loop is not a unitary operator but a super-operator in general'', making it not realizable as a circuit of unitary logic gates.
\item \citet{andresmartinez2022} show that a loop-like operation is possible via weak measurements of the system. The issue is that if the strength parameter used by the scheme is too strong or weak, the result is incorrect, and in turn determining its optimal value requires access to problem structure analogous to a classical iteration bound.
\end{itemize}

As a result, the quantum control machine has been designed to accord with the standard assumption in the theory of quantum computation~\citep[Section 3.5]{bv1997} that the termination time of a quantum computation is classically fixed.

\end{document}